\colorlet{MyBlue}{DodgerBlue!60!Black}
\colorlet{MyGreen}{DarkGreen!85!Black}
\definecolor{ngreen}{RGB}{56, 188, 83}
\definecolor{nred}{RGB}{196, 39, 39}
\numberwithin{equation}{section}  
\crefname{app}{Appendix}{Appendices}
\newcommand{\debug}[1]{#1}
\theoremstyle{plain}
\newtheorem{theorem}{Theorem}
\newtheorem*{corollary*}{Corollary}
\newtheorem{lemma}[theorem]{Lemma}
\newtheorem{proposition}[theorem]{Proposition}
\theoremstyle{definition}
\newtheorem{definition}[theorem]{Definition}
\newtheorem*{definition*}{Definition}
\newtheorem*{hypothesis*}{Hypothesis}
\theoremstyle{remark}
\newtheorem{remark}{Remark}
\newtheorem*{remark*}{Remark}
\newtheorem*{notation*}{Notational remark}
\numberwithin{theorem}{section}
\numberwithin{remark}{section}
\numberwithin{example}{section}
\newcommand{\R}{\mathbb{R}}
\newcommand{\N}{\mathbb{N}}
\newcommand{\game}{\debug \Gamma}
\newcommand{\play}{\debug i}
\newcommand{\playalt}{\debug j}
\newcommand{\nplayers}{\debug N}
\newcommand{\players}{\debug {[\nplayers]}}
\newcommand{\act}{\debug s}
\newcommand{\actalt}{\debug t}
\newcommand{\actprof}{\boldsymbol{\act}}
\newcommand{\actprofalt}{\boldsymbol{\actalt}}
\newcommand{\actprofr}{\boldsymbol{\debug r}}
\newcommand{\actprofu}{\boldsymbol{\debug u}}
\newcommand{\actprofw}{\boldsymbol{\debug w}}
\newcommand{\actprofzero}{\boldsymbol{\debug 0}}
\newcommand{\actions}{\debug S}
\newcommand{\equi}[1]{\debug \Theta^{#1}}
\newcommand{\equii}[1]{\debug \Theta_{#1}}
\newcommand{\rgame}{\debug \Xi}
\newcommand{\payoff}{\debug g}
\newcommand{\eq}[1]{#1^{\ast}}
\newcommand{\brn}{\br^{(\nplayers)}}
\DeclareMathOperator{\equilibria}{\mathsf{\debug{PNE}}}
\DeclareMathOperator{\sequilibria}{\mathsf{\debug{SPNE}}}
\DeclareMathOperator{\br}{\mathsf{\debug{BR}}}
\DeclareMathOperator{\BRD}{\mathsf{\debug{Brd}}}
\newcommand{\hyperc}{\mathscr{\debug H}}
\newcommand{\vertices}{\mathscr{\debug V}}
\newcommand{\verticesalt}{\mathscr{\debug U}}
\newcommand{\edges}{\mathscr{\debug E}}
\newcommand{\boundary}{\debug \partial}
\newcommand{\edge}{\debug e}
\newcommand{\orient}[1]{\overrightarrow{#1}}
\newcommand{\rorient}[1]{\overleftarrow{#1}}
\newcommand{\orientedges}[2]{\debug{\braces{#1\to#2}}}
\newcommand{\profitable}{\mathscr{\debug D}}
\newcommand{\expproc}{\mathscr{\debug Q}}
\newcommand{\percol}{\mathscr{\debug B}}
\newcommand{\comp}{\debug C}
\newcommand{\ecomp}{\debug \mu}
\newcommand{\ncomp}{\debug j}
\newcommand{\card}[1]{\mathsf{\debug{card}}\parens*{#1}}
\newcommand{\event}{\debug A}
\newcommand{\Expect}{\mathsf{\debug E}}
\newcommand{\probeq}{\debug \alpha}
\newcommand{\probless}{\debug \beta}
\newcommand{\Prob}{\mathsf{\debug P}}
\newcommand{\Probalt}{\mathsf{\debug Q}}
\DeclareMathOperator{\Bernoulli}{\mathsf{\debug{Bernoulli}}}
\DeclareMathOperator{\Binomial}{\mathsf{\debug{Binomial}}}
\DeclareMathOperator{\cardd}{\mathsf{\debug{card}}}
\DeclareMathOperator{\Norm}{\mathsf{\debug{Norm}}}
\DeclareMathOperator{\Poisson}{\mathsf{\debug{Poisson}}}
\DeclareMathOperator{\TV}{\mathsf{\debug{TV}}}
\newcommand{\diff}{\, \textup{d}}
\newcommand{\dist}{\debug \rho}
\newcommand{\bigoh}{\debug O}
\DeclareMathOperator*{\disjcup}{\dot\cup}
\DeclareMathOperator{\expo}{e}
\DeclareMathOperator*{\even}{\mathsf{\debug{even}}}
\DeclareMathOperator*{\odd}{\mathsf{\debug{odd}}}
\newcommand{\ball}{\debug B}
\newcommand{\bernoullipar}{\debug \theta}
\newcommand{\Bessel}{\debug I}
\newcommand{\consta}{\debug K}
\newcommand{\fairperc}[1]{\widetilde{#1}}
\newcommand{\floorm}{\debug m}
\newcommand{\hamm}{\debug h}
\newcommand{\incid}{\debug Y}
\newcommand{\isoln}{\debug \Upsilon}
\newcommand{\kolmo}{\debug \kappa}
\newcommand{\larg}{\mathscr{\debug L}}
\newcommand{\mixturepar}{\debug \gamma}
\newcommand{\nolarg}{\mathscr{\debug M}}
\newcommand{\normCDF}{\debug \Phi}
\newcommand{\poissonpar}{\debug \lambda}
\newcommand{\radius}{\debug r}
\newcommand{\rand}{\debug Z}
\newcommand{\run}{\debug k}
\newcommand{\subs}{\debug I}
\newcommand{\threeincid}{\debug G}
\newcommand{\threeinter}{\debug H}
\newcommand{\thresh}{\debug \ell}
\newcommand{\trap}{\mathscr{\debug T}}
\newcommand{\upbound}{\debug \zeta}
\DeclarePairedDelimiter{\braces}{\{}{\}}
\DeclarePairedDelimiter{\bracks}{[}{]}
\DeclarePairedDelimiter{\parens}{(}{)}
\DeclarePairedDelimiter{\abs}{\lvert}{\rvert}
\DeclarePairedDelimiter{\ceil}{\lceil}{\rceil}
\DeclarePairedDelimiter{\floor}{\lfloor}{\rfloor}
\DeclarePairedDelimiterX{\braket}[2]{\langle}{\rangle}{#1,#2}
\DeclarePairedDelimiterX{\inner}[2]{\langle}{\rangle}{#1,#2}
\DeclarePairedDelimiterX{\setdef}[2]{\{}{\}}{#1:#2}
\DeclarePairedDelimiterXPP{\probof}[1]{\prob}{(}{)}{}{%

#1}
\DeclarePairedDelimiterXPP{\exof}[1]{\ex}{[}{]}{}{%

#1}
\newacro{BRD}{best-response dynamics}
\newacro{CLT}{central limit theorem}
\newacro{PNE}{pure Nash equilibrium}
\newacro{NE}{Nash equilibrium}
\newacro{PNE}{pure Nash equilibrium}
\newacro{SPNE}{strict pure Nash equilibrium}
\newacro{WHP}{with high probability}
\newacro{WVHP}{with very high probability}
\title{Pure Nash Equilibria and Best-Response Dynamics in Random Games}
\author{Ben Amiet}
\address{School of Mathematical Sciences, Monash  University, Melbourne, Australia}
\email{ben.amiet@monash.edu}
\author{Andrea Collevecchio}
\address{School of Mathematical Sciences, Monash  University, Melbourne, Australia}
\email{Andrea.Collevecchio@monash.edu}
\author{Marco Scarsini}
\address{Department of Economics and Finance, Luiss, Viale Romania 32, 00196 Rome, Italy}
\email{marco.scarsini@luiss.it}	
\author{Ziwen Zhong}
\address{School of Mathematical Sciences, Monash  University, Melbourne, Australia}
\email{ziwen.zhong@monash.edu}
\subjclass[2010]{Primary: 91A10, secondary: 91A06, 60K35} 
\keywords{pure Nash equilibrium, random game, percolation, best response dynamics}
\begin{document}

\begin{abstract}
In finite games, mixed Nash equilibria always exist, but pure equilibria may fail to exist.
To assess the relevance of this nonexistence, we consider  games where the payoffs are drawn at random.
In particular, we focus on games where a large number of players can each choose one of two possible strategies and the payoffs are i.i.d.\ with the possibility of ties. 	
We provide asymptotic results about the random number of \aclp{PNE}, such as fast growth and a central limit theorem, with bounds for the approximation error. 
Moreover, by using a new link between percolation models and game theory, we describe in detail the geometry of \aclp{PNE} and show that, when the probability of ties is small, a best-response dynamics reaches a \acl{PNE} with a probability that quickly approaches one as the number of players grows.  
We show that various phase transitions depend only on a single parameter of the model, that is, the probability of having ties.
\end{abstract}

\maketitle


\section{Introduction}


\subsection{Background and motivation}

A  \acfi{PNE}\acused{PNE} in a normal form game is a profile of strategies (one for each player) such that, given the choice of the other players, no player has an incentive to make a different choice. In other words, deviations from an equilibrium are not profitable for any player.
This concept of equilibrium, although quite simple and powerful, has the drawback that not every game admits \acp{PNE}.
The idea of \emph{mixed strategy}---i.e., a  probability distribution over a player's strategy set---goes back to \citet{Bor:CRASP1921} and \citet{vNe:MA1928}.
John Nash showed that, if mixed strategies are allowed and the choice criterion is the expected payoff with respect to their product, then any game with a finite number of players and strategies admits an equilibrium \citep{Nash:PNAS1950,Nash:AM1951}.
As before, a mixed equilibrium is a profile of mixed strategies that does not allow profitable deviations.

Although the definition and properties of mixed strategies and mixed equilibria are clear, their interpretation is far from unanimous. 
Section~3.2 of \citet{OsbRub:MITPress1994}, dedicated to  the interpretation of mixed equilibria, has paragraphs individually signed by each of the two authors, since they could not reach an agreement.
In general, \acp{PNE} have a stronger epistemic foundation than mixed equilibria. 
As mentioned before, the main problem of \acp{PNE} is existence. 

Some authors have tried to frame this problem in a stochastic way: given a set of players and a set of strategies for each player, if payoffs are drawn at random, what is the probability that the game admits \acp{PNE}?
More precisely, what is the distribution of the number of \acp{PNE} in a game with random payoffs?
The answer to this question clearly depends on the way random payoffs are drawn.
In any case, for a fixed number of players and strategies, the answer is computationally daunting.
It is for this reason that some papers have chosen to investigate the problem from an asymptotic viewpoint; that is, they have looked at the limit distribution of the number of \acp{PNE} as the number of either strategies or players increases to infinity.
	
The basic common assumption of much of this literature is that the distribution of the random payoffs is nonatomic and  payoff profiles are independent.
Under these hypotheses, the probability that two payoffs coincide is zero and, as a consequence, calculations are significantly simplified.

It is well-known that \ac{PNE} are hard to compute
\citep{DasGolPap:SIAMJC2009}.
One way to address the issue is to devise iterative procedures that converge to a \ac{PNE}. 
For instance, some adaptive procedures start from a strategy profile and allow a single player (picked \emph{at random}) to choose a different strategy.
In better-response dynamics the chosen player picks uniformly among the  strategies that guarantee a higher payoff than the one the player presently  has, given the strategies of the other players. 
In \acfi{BRD}\acused{BRD} the player chooses  one of the strategies  that  give  the highest possible payoff  (there could be ties),  given the strategies of the other players. When each player has only two available strategies, the two procedures coincide.
In either scheme, if no such strategy exists, the player will not move and a different player is chosen at random.
When a new strategy profile is reached, the process is repeated. 
If we reach a profile for which no player has a profitable deviation, then the process has reached a \ac{PNE}.
The question is whether, starting from any strategy profile, a \ac{PNE} is reached. 
In general the answer is negative: first, because a game may fail to have \acp{PNE}; second, due to the fact that even when \acp{PNE} exist, players may be trapped in a subset of vertices that does not contain a \ac{PNE}.
One way to determine the severity of this failure to reach a \ac{PNE} via \acl{BRD} is to examine games with random payoffs.


\subsection{Our contribution} 

In the present paper we consider games in which the number $\nplayers$ of players is large, each player has two strategies, and payoffs are random. 
The main novelty of our approach is to show the strict relation between  games with random payoffs and percolation, and to provide analytic results rather than simulations.
In games with random payoffs, a significant part of the existing literature has focused on the number of \acp{PNE} when the payoffs are i.i.d.\ with absolutely continuous distribution  with respect to the Lebesgue measure.
Here we extend this analysis  to the case where ties are allowed. We also discuss the behavior of \ac{BRD}. 
In particular, we provide results that concern not only the number of \acp{PNE}, but also how easily they can be reached via \ac{BRD}. 
The main tool for this analysis is a correspondence between a random oriented graph that represents our random game and a suitable percolation graph. 

A main feature of the present work is that we dispense with the assumption of nonatomic distribution of the payoffs and therefore allow ties to exist. 
We  show that the probability of ties plays a crucial role in many ways. 
For example, it  determines the asymptotic distribution of the number of \acp{PNE}.
Moreover, we  use tools from percolation theory to describe the geometry of the set of \acp{PNE}, which also depends on the probability of ties.
This description permits us to analyze the performance of \ac{BRD}.
This has been extensively done in the literature for the class of potential games. 
In our paper we can show that, asymptotically in the number of players, with high probability \ac{BRD} converges to a \ac{PNE}, if the probability of ties is positive, but small (less than $0.68$).  

As mentioned before, the probability of ties in the payoffs, which we call $\probeq$, is the fundamental parameter in this model.  
Different values of $\probeq$ produce different possible behaviors in the number of \acp{PNE}. 
We will show that, for as long as $\probeq$ is positive, the game has typically around $(1+\probeq)^{\nplayers}$ \acp{PNE}.
For all $\nplayers$ large enough and $\probeq$  small enough we have that \ac{BRD} converges to a \ac{PNE} (\cref{th:accessibility,th:BRD} below). 
Moreover, for all large $\nplayers$, if $\probeq$ is strictly less than $1/2$, then all \ac{PNE} are reachable via \ac{BRD} from any deterministic starting point. 
Conversely, some of them are unreachable when $\probeq$ is at least $1/2$.
Furthermore, when $\probeq$ is positive, \cref{th:case-atoms} shows a concentration of the number of \ac{PNE} around $(1+\probeq)^{\nplayers}$  and establishes a \ac{CLT}. 
After a draft of this paper was ready, we learned that a similar \ac{CLT} was obtained in \cite{Rai:P7YSM2003}. Our approach is different and relies on a Poisson approximation, which we believe is of independent interest.

To illustrate this phenomenon, we plot in \cref{fi:arbre1} the case where the payoffs take only the values $\{-1, 1\}$ with equal probability (notice that $\probeq = 0.5$ in this context).  
The average number of \ac{PNE} exactly fits the curve $(1.5)^\nplayers$, confirming our prediction.
\begin{figure}[ht]
	\centering
	\includegraphics[width=8cm]{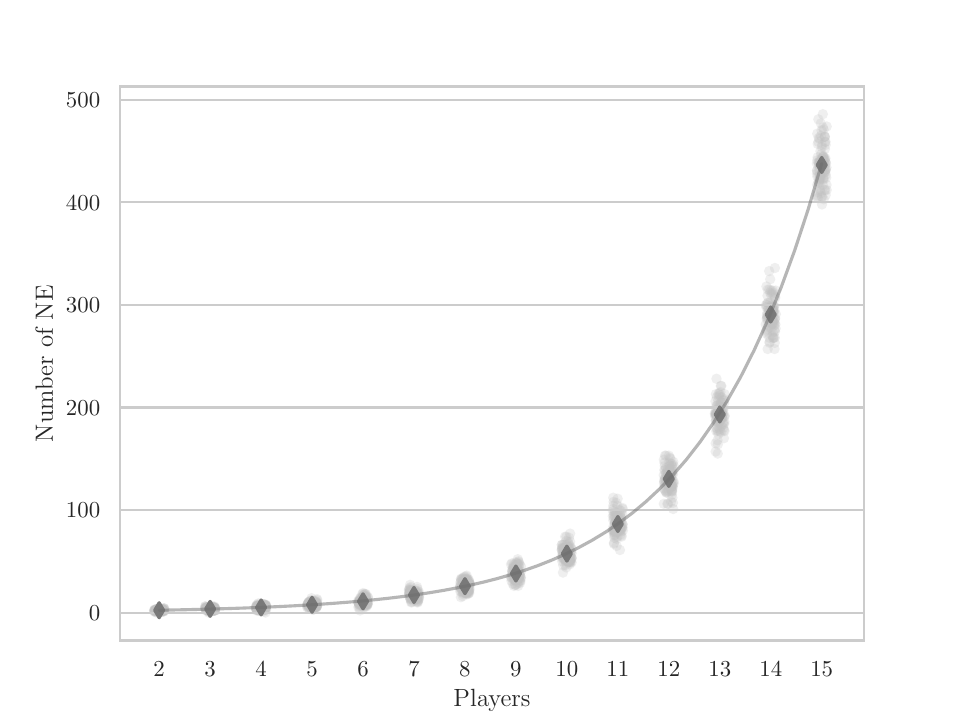}
	\caption{\label{fi:arbre1} Number of \acp{PNE} for $2\leq\nplayers\leq 15$, $\probeq = 0.5$, with 100 trials per $\nplayers$. Diamond markers represent average number per value of $\nplayers$, and the curve $(1.5)^{\nplayers}$ is included for comparison.}
\end{figure}
\begin{figure}[ht]
	\centering
	\includegraphics[width=8cm]{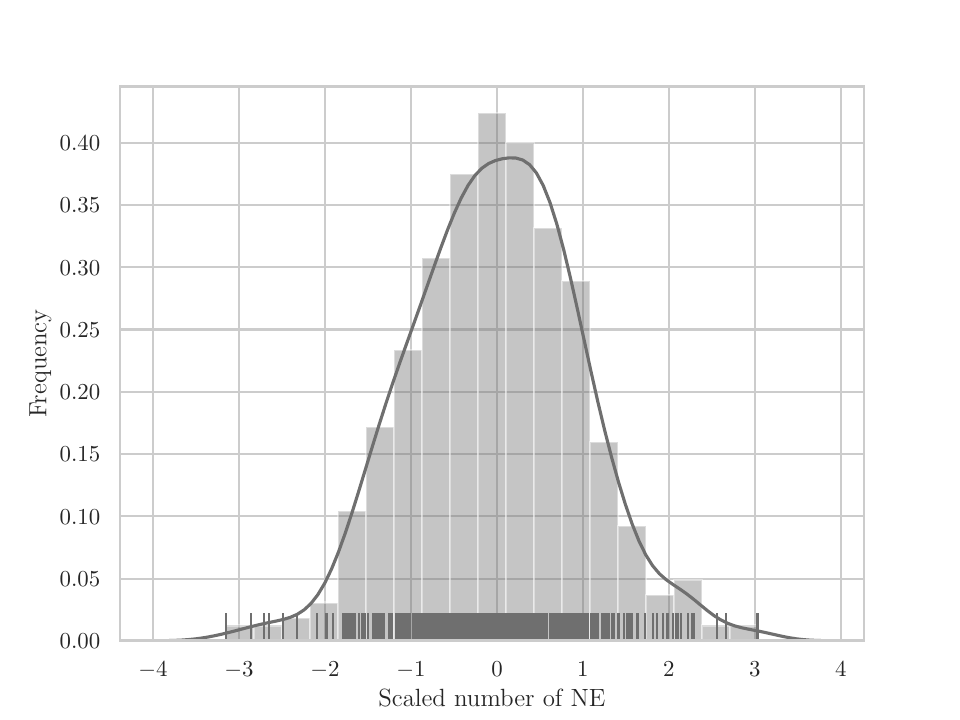}
	\caption{\label{fi:arbre2} \ac{CLT} result for $\nplayers=15$, $\probeq=0.9$, with 500 trials.}
\end{figure}
Moreover, we are able to quantify the fluctuations (see \cref{th:case-atoms}) which are of the order $(1+\probeq)^{\nplayers/2}$. 
Finally, the number of \acp{PNE}, once properly rescaled, rapidly converges to a standard normal (see \cref{fi:arbre2}).
We emphasize that our results depend on the payoff distributions only through the parameter $\probeq$, and remain applicable even when the payoff distributions vary across players.


\subsection{Related work}

As mentioned before, several papers have considered aspects related to the number of \acp{PNE} in games with random payoffs. 
In many of the papers that we consider below, and unless otherwise stated, the random payoffs are i.i.d.\ from a continuous distribution.

\citet{Gol:AMM1957} considered zero-sum two-person games  and showed that the probability of having a \ac{PNE} goes to zero as the number of strategies grows.
He also briefly mentioned the case of payoffs with a Bernoulli distribution.
\citet{GolGolNew:JRNBSB1968} considered general two-person games and showed that the probability of having at least one \ac{PNE} converges to $1-\expo^{-1}$ as the number of strategies diverges.
\citet{Dre:JCT1970} generalized this result to the case of an arbitrary finite number of players.

More recent papers have looked at the asymptotic distribution of the number of \acp{PNE}.
\citet{Pow:IJGT1990} showed that, when the number of strategies of at least two players goes to infinity, the distribution of the number of \acp{PNE} converges to a $\Poisson(1)$.

\citet{Sta:GEB1995} derived an exact formula for the distribution of the number of \acp{PNE} in random games and obtained the result in \citet{Pow:IJGT1990} as a corollary.
\citet{Sta:MOR1996} dealt with the case of two-person symmetric games and obtained Poisson convergence for the number of both symmetric and asymmetric \acp{PNE}.
In all the above models, the expected number of \acp{PNE} is in fact $1$.
Under different hypotheses, this expected number diverges. 
For instance, \citet{Sta:MSS1997,Sta:EL1999} showed that this is the case for games with vector payoffs and for games of common interest, respectively.
In \citet{Sta:EL1999} both strictly and weakly ordinal preferences were studied. 
\citet{Rai:P7YSM2003} used the Chen-Stein method to bound the distance between the distribution of the normalized number of \acp{PNE} and a normal distribution. His result  does not assume continuity of the payoff distributions.

\citet{RinSca:GEB2000} weakened the hypothesis of i.i.d.\ payoffs; that is, they assumed that  payoff vectors corresponding to different strategy profiles are i.i.d., but they allowed some dependence within the same payoff vector. 
In this setting, they proved asymptotic results when either the number of players or the number of strategies diverges. 
More precisely, if each payoff vector has a multinormal exchangeable distribution with correlation coefficient $\rho$, then the following hold: for $\rho$ negative the number of \acp{PNE} goes to zero in probability, for $\rho=0$ it converges to a $\Poisson(1)$, and for $\rho$ positive it diverges and a \ac{CLT} holds.

\citet{Tak:GEB2008} considered the distribution of the number of \acp{PNE} in a random game with two players, conditionally  on the game having nondecreasing best-response functions.
This assumption greatly increases the expected number of \acp{PNE}.
\citet{DasDimMos:AAP2011} extended the framework of games with random payoffs to graphical games. Strategy profiles are vertices of a graph and players' strategies are binary, like in our model. 
Moreover, their payoff depends only on their strategy and the strategies of their neighbors.
The authors studied how the structure of the graph affects existence of \acp{PNE} and they examined both deterministic and random graphs.

The issue of solution concepts in games with random payoffs has been explored by various authors in different directions.
For instance, \citet{Coh:PNAS1998} studied the probability that Nash equilibria (both pure and mixed) in a finite random game maximize the sum of the players' payoffs.
\citet{PeiTak:GEB2019} devoted their attention to rationalizable strategies in two-person games with random payoffs and performed an asymptotic analysis in the number of strategies.

Finding a \ac{PNE} in a game is PPAD-complete (\citet{DasGolPap:SIAMJC2009}). 
Therefore, given this computational difficulty, several learning procedures have been proposed to reach an equilibrium by playing the game  several times  \citep[see, e.g.,][]{TarVaz:AGT20017,BluMan:AGT2007}.
Probably the simplest such procedure is \ac{BRD}. 
This approach has been taken, among others, by \citet{Blu:GEB1993,You:E1993,FriMez:JET2001,TakYam:EB2002} and \citet{FabJagSha:TCS2013}.
The main problem that arises is that \ac{BRD} is guaranteed to converge to a \ac{PNE} only when the game is of some specific type, for instance, a potential game (\citet{MonSha:GEB1996}).
\citet{Ber:E1984} introduced the concept of point rationalizable strategies, i.e., strategies that survive iterated elimination of never best responses against pure strategies.

The performance of \ac{BRD} in randomly drawn potential games has been studied in \citet{CouDurGauTou:NetGCoop2014,DurGau:AGT2016} and \citet{DurGarGau:PE2019}.
To be able to deal also with games for which \ac{BRD} does not converge to a \ac{PNE}, \citet{GoeMirVet:FOCS2005} defined the concept of \emph{sink equilibria}  (which in this paper are called traps).
A trap  is a strongly connected set of two or more vertices with no edges leading out of the set.
If players are selected at random and asked to choose a best response, the process may eventually reach a sink equilibrium and wander on the game's components forever.
\citet{ChrMirSid:TCS2012} considered a similar model,  focusing on the rate of convergence to approximate solutions of the game. 
\citet{DutKes:SODA2017} considered \ac{BRD} in the context of combinatorial auctions.

The idea of generating games at random to check properties of learning procedures was used by \citet{GalFar:PNAS2013} and, more recently, by \citet{PanHeiFar:SA2019}, who studied---mainly through simulations---the behavior of various  learning procedures in games whose payoffs are drawn at random from a multinormal distribution.

Some of our results are proved by using a connection with percolation theory, a field introduced by
\citet{BroHam:PCPS1957}. 
Since then, the theory has developed very quickly and has become very important in both the mathematics and physics communities. 
For a general account on percolation, see \cite{Gri:Springer1999}. 
We will focus on  percolation on the hypercube, and will use classical results by \cite{ErdSpe:CMA1979} and \cite{Bol:CUP2001}, as well as more recent results by \citet{McDScoWhi:arXiv2020}.
The connection between percolation and random oriented graphs has been studied, for instance, in an unpublished manuscript by \citet{Lin:arXiv2009}, in the case where the graph is fully oriented, i.e., there are no ties in the payoffs.


\subsection{Organization of the paper}

\cref{se:number} deals with the number of \acp{PNE} in a random game.
\cref{se:BRD} studies the behavior of \ac{BRD} in these games.
The interaction between games with random payoffs and percolation is expounded in \cref{se:percolation}.
\cref{se:proofs} contains the proofs.


\section{Number of \aclp{NE} in random games}
\label{se:number}

We first introduce some notation that will be adopted throughout the paper.	 
We consider a game 
\begin{equation}\label{eq:game-def}
\game=\parens*{\players,(\actions_{\play})_{\play\in\players},(\payoff_{\play})_{\play\in\players}},
\end{equation} 
where $\players\coloneqq\braces{1,\dots,\nplayers}$ is the set of players and $\actions_{\play}$ is the set of strategies of each player $\play\in\players$. 
We set $\actions=\times_{\play\in\players}\actions_{\play}$, and we let $\payoff_{\play}\colon\actions\to\R$ be the payoff function of player $\play$. For each $\actprof=\parens{\act_{1},\dots,\act_{\nplayers}}\in\actions$, we call $\actprof_{-\play}$ the strategy profile of all players except $\play$.

\begin{definition}\label{de:Nash-equilibrium}
	A strategy profile $\eq{\actprof}$ is a \ac{PNE} of the game $\game$ if for all $\play\in\players$ and for all $\act_{\play}\in\actions_{\play}$ we have
	\begin{equation}\label{eq:Nash}
	\payoff_{\play}(\eq{\actprof})\ge\payoff_{\play}(\act_{\play},\eq{\actprof}_{-\play}).
	\end{equation}
	The set of \acp{PNE} of $\game$ is denoted by $\equilibria(\game)$.

	A strategy profile $\eq{\actprof}$ is a \acfi{SPNE}\acused{SPNE} if  for all $\play\in\players$ and for all $\act_{\play}\in\actions_{\play}$ we have
	\begin{equation}\label{eq:strict-Nash}
	\payoff_{\play}(\eq{\actprof}) >\payoff_{\play}(\act_{\play},\eq{\actprof}_{-\play}).
	\end{equation}
	The set of \acp{SPNE} of  $\game$  is denoted by $\sequilibria(\game)$.
\end{definition}

Our goal is to examine some generic properties of games with binary strategies and a large number of players. 
To achieve this, we assume that the payoffs of our game are drawn at random. 
To this end, consider a probability space $(\Omega,\mathscr{F},\Prob)$, on which the following sequence of random games is defined. Let $\rgame_{\nplayers}$ be a game with $\nplayers$ players, $\actions_{\play}=\braces{0,1}$ for each $\play\in\players$, and random i.i.d.\ payoffs.
In particular, for each $\actprof\in\actions$, the payoff $\payoff_{\play}(\actprof)$ is the realization of a random variable $\rand_{\play}^{\actprof}$, and the random variables $\parens{\rand_{\play}^{\actprof}}_{\play\in\players,\actprof\in\actions}$ are i.i.d.. 
The symbol $\rand$ denotes a generic independent copy of $\rand_{\play}^{\actprof}$.

We also define 
\begin{equation}\label{eq:probeq} 
\probeq\coloneqq\Prob\parens*{\rand_{1}=\rand_{2}}, \quad 
\probless\coloneqq\Prob\parens*{\rand_{1}<\rand_{2}}=\frac{1-\probeq}{2},
\end{equation}
where $\rand_{1}$ and $\rand_{2}$ are i.i.d.\ copies of $\rand$.
As we will see, all of the results in the paper will depend on $\probeq$. 
Most of the existing literature deals with the case $\probeq=0$.
In what follows we will show the role of $\probeq$ in various phase transitions.
In particular, we are interested in the asymptotic behavior of $\equilibria(\rgame_{\nplayers})$ and $\sequilibria(\rgame_{\nplayers})$.

For any set $\event$, the symbol $\card{\event}$ denotes its cardinality.
Observe that the probability with which a strategy profile $\actprof$ is a \ac{PNE} is  $\parens{1-\probless}^{\nplayers}$, and the probability that it is an \ac{SPNE} is  $\probless^{\nplayers}$.
As a consequence, for any $\nplayers \ge 1$ we have
\begin{align}
\label{eq:exact-number-PNE}
\Expect\bracks*{\card{\equilibria(\rgame_{\nplayers})}}
&=2^{\nplayers}\parens{1-\probless}^{\nplayers},\\
\label{eq:exact-number-SNE}
\Expect\bracks*{\card{\sequilibria(\rgame_{\nplayers})}}
&=2^{\nplayers}\probless^{\nplayers}.
\end{align}
This implies that the expected number of \acp{PNE} is always $1$ when $\probeq=0$ and diverges when $\probeq>0$. 
The following theorems provide sharp description of the asymptotic behavior of the number of \acp{PNE} and \acp{SPNE}.

\begin{theorem}
[Behavior of \aclp{SPNE}]
\label{th:SNE} 
Consider a sequence of random games $\rgame_{\nplayers}$.
\begin{enumerate}[\upshape(a)]
    \item
\label{it:th:SNE-1}     
    If $\probeq = 0$, then, for all $\run \in \N\cup\{0\}$, we have
\begin{equation}
\label{eq:SNE-alpha=0}
\lim_{\nplayers\to\infty}\Prob\parens*{\card{\sequilibria(\rgame_{\nplayers})} = \run} = \frac{\expo^{-1}}{\run!}.
\end{equation}
	\item 
\label{it:th:SNE-2}
If $\probeq>0$, then 
\begin{equation}
\label{eq:SNE-alpha>0}
\Prob\parens*{\lim_{\nplayers\to\infty}\card{\sequilibria(\rgame_{\nplayers})} = 0}=1.
\end{equation}
	\end{enumerate}
\end{theorem}

\begin{remark} \label{re:NE}
	 When $\probeq=0$,  the numbers of \acp{PNE} and of \acp{SPNE} are almost surely equal, as any two payoffs are almost surely different.
	In this case, convergence of the number of \acp{PNE} to a Poisson distribution as the number of players increases was proved by  \citet{ArrGolGor:AP1989,RinSca:GEB2000} for any fixed number of strategies.
	
	Conversely, when $\probeq>0$,  the number of both \acp{PNE} and \acp{SPNE} have radically different behavior. This fact will be better described in \cref{th:case-atoms,th:accessibility} below.
\end{remark}

Call $\normCDF$ the cumulative distribution function of a standard normal random variable.
\begin{theorem}
[\ac{CLT} for \aclp{PNE}]
\label{th:case-atoms}
	Assume that the law of $\rand$ has atoms, i.e., $\probeq>0$. 
	Then there exists a constant $\consta_\probeq>0$ which depends only on $\probeq$, such that 
\begin{equation}\label{eq:atoms-wvhp}
	\sup_x \abs*{\Prob\parens*{\frac{\card{\equilibria(\rgame_{\nplayers})} -  (1+\probeq)^{\nplayers}}{(1+\probeq)^{\nplayers/2}} \le x} - \normCDF(x)} \le \consta_{\probeq} \nplayers \max\parens*{\frac{\parens{1+\probeq}}{2},\frac{1}{\parens{1+\probeq}^{1/2}}}^{\nplayers}.
\end{equation}
\end{theorem}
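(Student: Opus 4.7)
The plan is to represent $\abs{\equilibria_{\nplayers}}$ as a sum of Bernoulli indicators over the $\nplayers$-cube with a highly sparse dependency graph, and then to apply a Stein--method Berry--Esseen bound for sums with a bounded dependency graph, in the spirit of \cite{Cha:AP2008}.
For each $\actprof\in\braces{0,1}^{\nplayers}$ let $\indPNE{X}_{\actprof}\coloneqq\mathds{1}\braces{\actprof\in\equilibria_{\nplayers}}$, so $\abs{\equilibria_{\nplayers}}=\sum_{\actprof}\indPNE{X}_{\actprof}$. Writing $\actprof^{\play}$ for the profile obtained from $\actprof$ by flipping coordinate $\play$, \cref{de:Nash-equilibrium} shows that $\actprof$ is a \ac{PNE} iff $\rand_{\play}^{\actprof}\ge\rand_{\play}^{\actprof^{\play}}$ for all $\play\in\players$. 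These $\nplayers$ comparisons use $2\nplayers$ pairwise distinct i.i.d.\ variables and are therefore independent; since $\Prob(\rand_{1}\ge\rand_{2})=(1+\probeq)/2$, we get $\Expect[\indPNE{X}_{\actprof}]=((1+\probeq)/2)^{\nplayers}$ and $\Expect[\abs{\equilibria_{\nplayers}}]=(1+\probeq)^{\nplayers}$, which is exactly the centering in \cref{eq:atoms-wvhp}.

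Next I would work out the dependency structure. The indicator $\indPNE{X}_{\actprof}$ is measurable with respect to the $2\nplayers$ variables $\braces{\rand_{\play}^{\actprof},\rand_{\play}^{\actprof^{\play}}}_{\play\in\players}$, and a direct inspection shows $\indPNE{X}_{\actprof}$ and $\indPNE{X}_{\actprofalt}$ are independent as soon as the Hamming distance between $\actprof$ and $\actprofalt$ is at least $2$. The dependency graph is thus exactly $\hyperc_{\nplayers}$, of common degree $\nplayers$. For $\actprofalt=\actprof^{\play}$, both profiles being \acp{PNE} forces $\rand_{\play}^{\actprof}=\rand_{\play}^{\actprofalt}$ (probability $\probeq$), while the remaining $2(\nplayers-1)$ comparisons use pairwise disjoint variables, giving
\begin{equation*}
\Cov(\indPNE{X}_{\actprof},\indPNE{X}_{\actprofalt})=\probeq\parens*{\frac{1+\probeq}{2}}^{2(\nplayers-1)}-\parens*{\frac{1+\probeq}{2}}^{2\nplayers}=-\parens*{\frac{1-\probeq}{1+\probeq}}^{2}\parens*{\frac{1+\probeq}{2}}^{2\nplayers}.
\end{equation*}
Summing the $2^{\nplayers}$ diagonal variances and the $2^{\nplayers}\nplayers$ neighbour covariances, and using the identity $2^{\nplayers}((1+\probeq)/2)^{2\nplayers}=(1+\probeq)^{\nplayers}\cdot((1+\probeq)/2)^{\nplayers}$, yields
\begin{equation*}
\Var(\abs{\equilibria_{\nplayers}})=(1+\probeq)^{\nplayers}\parens*{1+O\parens*{\nplayers\cdot((1+\probeq)/2)^{\nplayers}}}=(1+\probeq)^{\nplayers}(1+o(1)),
\end{equation*}
so the scaling $(1+\probeq)^{\nplayers/2}$ in \cref{eq:atoms-wvhp} is the correct asymptotic standard deviation.

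With mean, variance, and dependency graph at hand, I would feed them into a Berry--Esseen-type normal approximation for sums over a bounded dependency graph, for instance the one in \cite{Cha:AP2008} (or a classical Baldi--Rinott/Chen--Shao bound). With maximum dependency degree $\nplayers+1$, with $2^{\nplayers}$ summands of Bernoulli type with mean $p\coloneqq((1+\probeq)/2)^{\nplayers}$, with third absolute central moments per summand bounded by $p$, and with $\sigma^{2}\sim(1+\probeq)^{\nplayers}$, one obtains
\begin{equation*}
\sup_{x}\abs*{\Prob\parens*{\frac{\abs{\equilibria_{\nplayers}}-(1+\probeq)^{\nplayers}}{(1+\probeq)^{\nplayers/2}}\le x}-\normCDF(x)}\le\frac{C\nplayers^{2}\cdot 2^{\nplayers}p}{(1+\probeq)^{3\nplayers/2}}=\frac{C\nplayers^{2}}{(1+\probeq)^{\nplayers/2}}.
\end{equation*}
For fixed $\probeq>0$ the quantity $\nplayers^{2}(1+\probeq)^{-\nplayers/4}$ is bounded uniformly in $\nplayers$, so absorbing the polynomial prefactor into a constant $\consta_{\probeq}$ depending only on $\probeq$ converts this into \cref{eq:atoms-wvhp}.

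The main obstacle is establishing the variance estimate with the correct leading order: there are $\nplayers\cdot 2^{\nplayers}$ negative neighbour covariances, and the cancellation against the $2^{\nplayers}$ diagonal variances is tight, relying on the per-neighbour covariance being $O(p^{2})$ and the exponential smallness of $p$. A secondary, more technical, point is selecting a version of the Stein bound whose polynomial-in-$\nplayers$ prefactor is cheaply absorbed by a factor of $(1+\probeq)^{\nplayers/4}$; this is precisely what dictates the slightly suboptimal exponent $\tfrac{1}{4}$ (rather than $\tfrac{1}{2}$) in the final rate.
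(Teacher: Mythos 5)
Your decomposition and moment computations coincide with the paper's: it uses the same indicator variables, obtains $\Expect\abs{\equilibria_{\nplayers}}=(2(1-\probless))^{\nplayers}=(1+\probeq)^{\nplayers}$, and its variance identity \eqref{eq:sigmaN} is exactly your diagonal-plus-neighbour-covariance computation (your closed form $-\parens{\tfrac{1-\probeq}{1+\probeq}}^{2}\parens{\tfrac{1+\probeq}{2}}^{2\nplayers}$ agrees with $\probeq(1-\probless)^{2\nplayers-2}-(1-\probless)^{2\nplayers}$), so the centering and scaling are justified in the same way. The overall strategy --- a Stein-method normal approximation exploiting the hypercube dependency structure --- is also the paper's; the only divergence is the final black box: the paper invokes Chatterjee's Theorem~2.2, a Kantorovich--Wasserstein bound built from resampling derivatives and the auxiliary statistic $\auxv$, and then pays a square root via $\kolmo\le\sqrt{2\wasser}$, which is precisely where its exponent $\tfrac14$ comes from.

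That black box is the one genuine gap in your plan: the displayed Kolmogorov bound $C\nplayers^{2}\,2^{\nplayers}p/(1+\probeq)^{3\nplayers/2}$ (with $p=((1+\probeq)/2)^{\nplayers}$) is not an off-the-shelf theorem. The Baldi--Rinott/Rinott Kolmogorov bounds for dependency graphs are stated in terms of the almost-sure bound $B$ on the centered summands; here $B\approx 1$ while the indicators are rare events, so for instance the term $nD^{2}B^{3}/\sigma^{3}=2^{\nplayers}\nplayers^{2}(1+\probeq)^{-3\nplayers/2}$ does not even tend to zero unless $(1+\probeq)^{3/2}>2$, i.e.\ roughly $\probeq>0.59$; worst-case bounds are killed by the rare-event structure. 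A bound of the form $D^{2}\sum_{i}\Expect\abs{\xi_{i}}^{3}/\sigma^{3}$ is legitimately available for the Wasserstein distance, but converting it through $\kolmo\le\sqrt{2\wasser}$ lands you at $C\nplayers(1+\probeq)^{-\nplayers/4}$, and at that exponent the leftover factor $\nplayers$ can no longer be absorbed into a constant depending only on $\probeq$. What does close the argument is a Kolmogorov bound under local dependence expressed in terms of moments with a polynomial-in-$D$ prefactor (Chen--Shao style): its base rate is $\mathrm{poly}(\nplayers)\,(1+\probeq)^{-\nplayers/2}$, leaving exponential slack to swallow the polynomial, and it yields \eqref{eq:atoms-wvhp} with room to spare --- but the absorption must happen before, not after, any square-root loss. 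So: the plan is the paper's route in all its computational steps; to make it a proof you must pin down the exact normal-approximation theorem, and the choice is not interchangeable.
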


\begin{remark} 
Call $\Norm(\mu,\sigma^{2})$ the normal distribution with mean $\mu$ and variance $\sigma^{2}$.
Define the Kolmogorov distance $\kolmo$ between two probability measures $\Prob,\Probalt$ on $\R$ as 
\begin{equation}
\label{eq:Kolmogorov}
\kolmo(\Prob,\Probalt) \coloneqq \sup_{x\in\R}\abs*{\Prob\parens*{(-\infty,x]}-\Probalt\parens*{(-\infty,x]}}.
\end{equation}
With an abuse of language, we will speak of Kolmogorov distance between two random variables to indicate the distance between their corresponding laws.
Hence, \cref{eq:atoms-wvhp} is equivalent to 
	\begin{equation}\label{eq:atoms-wvhp-2}
	\kolmo\parens*{\card{\equilibria(\rgame_{\nplayers})}, \Norm((1+\probeq)^{\nplayers},(1+\probeq)^{\nplayers})}
	\le \consta_{\probeq}\nplayers \max\parens*{\frac{\parens{1+\probeq}}{2},\frac{1}{\parens{1+\probeq}^{1/2}}}^{\nplayers}.
	\end{equation}
\end{remark}

A direct consequence of \cref{th:case-atoms} is that the number of \acp{PNE} grows exponentially in $\nplayers$, whenever $\probeq>0$. 
The following is  a more precise statement.

\begin{theorem}
\label{th:geo}
If $\probeq>0$, then
\begin{equation}
\Prob\parens*{\lim_{\nplayers \to \infty} \frac{\card{\equilibria(\rgame_{\nplayers})}}{(1+\probeq)^{\nplayers}}=1}=1.
\end{equation}
\end{theorem}


\section{\acl{BRD}}
\label{se:BRD}
We focus on games with $\nplayers$ players and $2$ possible strategies per player. 
We associate to any such game $\game$ the graph $\hyperc_{\nplayers}=\parens{\vertices_{\nplayers},\edges_{\nplayers}}$, where the set of vertices is the set of strategy profiles, i.e., $\vertices_{\nplayers}=\actions$,  and two vertices $\actprof,\actprofalt$ are connected by an edge in $\edges_{\nplayers}$ if and only if 
\begin{equation}\label{eq:neighbors}
\act_{\play}\neq\actalt_{\play}\text{ for exactly one }\play\in\players\text{ and }\act_{\playalt}=\actalt_{\playalt}\text{ for all }\playalt\neq\play.
\end{equation}
In this case we write $\actprof\sim_{\play}\actprofalt$. Moreover, we write  $\actprof\sim\actprofalt$ if  $\actprof\sim_{\play}\actprofalt$ for some $\play\in\players$ and say that $\actprof,\actprofalt$ are \emph{neighbors}. For each pair $\actprof,\actprofalt$  of neighbors, call $\bracks{\actprof,\actprofalt}$ the edge connecting them.  
The vertex $(0,0, \ldots, 0) \in \vertices_{\nplayers}$ is denoted by $\actprofzero$.

\begin{definition} \label{de:nash-enash} 
Given a strategy profile $\actprof\in\vertices_{\nplayers}$, define
\begin{equation}
\label{eq:profitable}
\profitable(\actprof) \coloneqq \{\actprofalt \in \vertices_{\nplayers}  \colon \exists\; \play\in\players \text{ such that } \actprof\sim_{\play}\actprofalt  \text{ and } \payoff_{\play}(\actprofalt)>\payoff_{\play}(\actprof)\}
\end{equation}
to be the set of strategy profiles that are \emph{strictly profitable deviations} from  $\actprof$.

We define recursively  a discrete-time Markov chain $\BRD(\game, \actprof)= (\BRD_{\run}(\game, \actprof))_{\run\in \N \cup \{0\}}$ on $\vertices_{\nplayers}$ as follows. 
The process starts at $\actprof \in \vertices_{\nplayers}$,  i.e., $\BRD_{0}(\game, \actprof) = \actprof$.
At time $\run+1$, independently of the history of the process,  pick an element uniformly at random from the random set $\profitable\left(\BRD_{\run}(\game, \actprof)\right)$, say $\actprofw$, and set $\BRD_{\run+1}(\game, \actprof)= \actprofw$.
Define 
\begin{equation}\label{eq:BRN}
\brn \coloneqq \BRD(\rgame_{\nplayers}, \actprofzero).
\end{equation} 
Finally, we say that $\BRD(\game, \actprof)$ converges to a \ac{PNE} if there exists some $\run_0\in\N$ such that, for all $\run\geq\run_0$, $\profitable\left(\BRD_{\run}(\game, \actprof)\right) = \varnothing$.
\end{definition}

We associate to each random game $\rgame_{\nplayers}$ a partially oriented random graph $\orient{\hyperc}_{\nplayers}^{\probless}=\parens{\vertices_{\nplayers},\orient{\edges}_{\nplayers}}$ as follows. 
Let $\actprof\sim_{\play}\actprofalt$;
then the oriented edge $\overrightarrow{\bracks{\actprof,\actprofalt}}$ from $\actprof$ towards $\actprofalt$ is in $\orient{\edges}_{\nplayers}$ if and only if   
$\rand^{\actprof}_{\play}<\rand^{\actprofalt}_{\play}$.  
An unoriented edge connects $\actprof$   and $\actprofalt$ in $\orient{\hyperc}_{\nplayers}^{\probless}$ 
if an only if $\rand^{\actprof}_{\play}=\rand^{\actprofalt}_{\play}$.
A similar representation has been used by \citet{You:E1993} and \citet{CanMenOzdPar:MOR2011} in the setting of deterministic games.
Obviously, in our context,  where the game $\rgame_{\nplayers}$ has random payoffs, the partial orientation of $\orient{\hyperc}_{\nplayers}^{\probless}=\parens{\vertices_{\nplayers},\orient{\edges}_{\nplayers}}$ is itself random.
If the law of $\rand$ is nonatomic, then the probability that two payoffs coincide is zero. 
Therefore, $\orient{\hyperc}_{\nplayers}^{1/2}$ is a `proper'  random orientation of $\hyperc_{\nplayers}$, where each edge is independently oriented in one direction or the other with  probability $1/2$.
If, on the other hand, the law of $\rand$ has atoms, then $\Prob\parens{\rand^{\actprof}_{\play}=\rand^{\actprofalt}_{\play}}>0$, so with positive probability some edges have no orientation. 

\begin{definition}\label{de:accessible}
	We say that $\actprofalt$ is \emph{directly accessible} from $\actprof$ if the oriented edge  $\overrightarrow{\bracks{\actprof,\actprofalt}}\in\orient{\edges}_{\nplayers}$. 
	We say that  $\actprofalt$ is \emph{accessible} from $\actprof$ if there exists a finite sequence $\actprof_{0},\actprof_{1},\dots,\actprof_{\run}$ such that $\actprof=\actprof_{0}$, $\actprofalt=\actprof_{\run}$ and, for all $\play\in\braces{0,\dots,\run-1}$, we have $\overrightarrow{\bracks{\actprof_{\play},\actprof_{\play+1}}}\in\orient{\edges}_{\nplayers}$. 
\end{definition}

Notice that $\actprofalt$ is directly accessible from $\actprof$ if and only if $\actprofalt$ is a profitable deviation from $\actprof$ for some player $\play$. 
\cref{de:accessible} has a natural interpretation in terms of $\brn$. 
Recall that this process starts at $\actprofzero$. 
Then $\actprofalt$ is accessible from $\actprofzero$ if and only if there is a positive probability that
$\brn$ reaches $\actprofalt$.

An example with three players is given in \cref{fi:game3}, where the orientation of the edges is induced by the payoffs in the accompanying table. 
The absence of ties produces a complete orientation of the hypercube.
The black edges are the possible paths of a \ac{BRD} starting at the vertex $(0,0,0)$.
The two vertices represented as filled circles, i.e., $(1,1,1)$ and $(0,0,1)$, are the two \acp{PNE} of the game.
Each of them can be  reached, with positive probability, by $\brn$.
\cref{th:BRD} below proves that for all large enough $\nplayers$ this is always the case when  $\probeq$ is small enough.

\begin{figure}[ht]
	\centering
	\begin{tabular}{|c|c|c|}
		\hline
		\multicolumn{3}{|c|}{Player 3 - Strategy  $0$}\\ 
		\hline
		\diagbox{Player 1}{Player 2} & $0$ & $1$\\ 
		\hline
		$0$ & $(0.949,0.530,0.055)$ & $(0.110,0.567,0.794)$ \\
		\hline
		$1$ & $(0.199,0.097,0.319)$ & $(0.202,0.549,0.174)$ \\
		\hline
		\hline
		\multicolumn{3}{|c|}{Player 3 - Strategy $1$}\\ 
		\hline
		\diagbox{Player 1}{Player 2} & $0$ & $1$\\  
		\hline
		$0$ & $(0.815,0.774,0.508)$ & $(0.292,0.684,0.126)$ \\ 
		\hline
		$1$ & $(0.209,0.659,0.569)$ & $(0.542,0.709,0.426)$ \\
		\hline
	\end{tabular}
    
    \vspace{1em}
    
	\begin{tikzpicture}[scale=.5]
    	\scriptsize
    	
    	\node [label={below left:$(0,0,0)$}] (0) at (0,0) {};
    	\node [label={below right:$(1,0,0)$}] (1) at (4,0) {};
    	\node [label={above left:$(0,1,0)$}] (2) at (0,4) {};
    	\node [label={right:$(1,1,0)$}] (3) at (4,4) {};
    	\node [label={left:$(0,0,1)$}] (4) at (2.55,2.55) {};
    	\node [label={below right:$(1,0,1)$}] (5) at (6.55,2.55) {};
    	\node [label={above left:$(0,1,1)$}] (6) at (2.55,6.55) {};
    	\node [label={above right:$(1,1,1)$}] (7) at (6.55,6.55) {};
    	
    	\draw (0,0) circle (3pt);
    	\draw (4,0) circle (3pt);
    	\draw (0,4) circle (3pt);
    	\draw (4,4) circle (3pt);
    	\filldraw (2.55,2.55) circle (3pt);
    	\draw (6.55,2.55) circle (3pt);
    	\draw (2.55,6.55) circle (3pt);
    	\filldraw (6.55,6.55) circle (3pt);
    	
    	\begin{scope}[thick,decoration={markings, mark=at position 0.33 with {\arrow{latex}}, mark=at position 0.66 with {\arrow{latex}}}] 
        	\draw[postaction={decorate}] (1) -- (0)[color=Silver];
        	\draw[postaction={decorate}] (0) -- (2);
        	\draw[postaction={decorate}] (0) -- (4);
        	\draw[postaction={decorate}] (1) -- (3)[color=Silver];
        	\draw[postaction={decorate}] (1) -- (5)[color=Silver];
        	\draw[postaction={decorate}] (2) -- (3);
        	\draw[postaction={decorate}] (6) -- (2)[color=Silver];
        	\draw[postaction={decorate}] (3) -- (7);
        	\draw[postaction={decorate}] (5) -- (4)[color=Silver];
        	\draw[postaction={decorate}] (6) -- (4)[color=Silver];
        	\draw[postaction={decorate}] (5) -- (7)[color=Silver];
        	\draw[postaction={decorate}] (6) -- (7)[color=Silver];
    	\end{scope}
	\end{tikzpicture}
	\caption{Representation of $\game_{3}$ on $\braces{0,1}^{3}$.}
	\label{fi:game3}		
\end{figure}
Our next result shows the existence of a sharp phase transition in the accessibility of \acp{PNE}. 
Roughly speaking, as the mass of the atoms in the distribution of $\rand$ grows, so too does the number of \acp{PNE}, though some may not be accessible from $\actprofzero$.
Hence, in this case, some \acp{PNE} may not be reachable by $\brn$.
Consider the partition 
\begin{equation}
\label{eq:partition}
\vertices_{\nplayers}=\orient{\larg}_{\nplayers}^{\probless,\actprofzero}\disjcup\orient{\nolarg}_{\nplayers}^{\probless,\actprofzero}
\end{equation}
in such a way that $\orient{\larg}_{\nplayers}^{\probless,\actprofzero}$ is the set that contains $\actprofzero$ as well as all vertices $\actprofalt$ that are accessible from $\actprofzero$ in the oriented graph $\orient{\hyperc}_{\nplayers}^{\probless}$. 
Conversely, all vertices that are \emph{not} accessible from $\actprofzero$ are contained in $\orient{\nolarg}_{\nplayers}^{\probless,\actprofzero}$.

\begin{theorem}
[Accessibility of \aclp{PNE}]
\label{th:accessibility}
Let $\probeq = 1 - 2 \probless$ as defined  in \cref{eq:probeq}.
	\begin{enumerate}[label={\upshape{(\alph*)}}]
		\item\label{it:th:accessibility-1}
		If $0\le\probeq<1/2$, then 
		\begin{equation}
		\label{eq:alpha<1/2}
		\sum_{\nplayers=1}^{\infty}\Prob\parens*{\equilibria(\rgame_{\nplayers})
		\not\subseteq\orient{\larg}_{\nplayers}^{\probless,\actprofzero}}<\infty.
		\end{equation}
	
		\item\label{it:th:accessibility-2}
		If $\probeq=1/2$, then 
\begin{equation}
\label{eq:alpha=1/2}
\lim_{\nplayers\to\infty}
\Prob\parens*{\card{\equilibria(\rgame_{\nplayers}) \cap \orient{\nolarg}_{\nplayers}^{\probless,\actprofzero}}>0}\ge 1-\expo^{-1}.
\end{equation}		
	
\item\label{it:th:accessibility-3}
If $\probeq>1/2$, then, for any $\consta>0$,
\begin{equation}
\label{eq:alpha>1/2}
\lim_{\nplayers\to\infty}
\Prob\parens*{\card{\equilibria(\rgame_{\nplayers}) \cap
\orient{\nolarg}_{\nplayers}^{\probless,\actprofzero}} > \consta} = 1.
\end{equation}		
	\end{enumerate}
\end{theorem}

This theorem can be interpreted as follows:

\noindent
\ref{it:th:accessibility-1} for the case $0\le\probeq<1/2$, we can use the first Borel-Cantelli lemma and conclude that there exists a finite random $\nplayers^*$ such that for all $\nplayers \ge \nplayers^*$ each of the \ac{PNE} in $\rgame_{\nplayers}$ is potentially reachable by $\brn$;

\noindent
\ref{it:th:accessibility-2} for $\probeq=1/2$, with positive probability there exist \acp{PNE} that are not reachable by $\brn$; 

\noindent
\ref{it:th:accessibility-3} for $\probeq>1/2$, the number of \acp{PNE} that are not reachable by $\brn$ grows to infinity, with probability approaching $1$.

Therefore, we have an interesting phase transition at $\probeq=1/2$.

To complete the picture, we give a result about the convergence of $\brn$. 
We use the notation $\floor{x}$ for the greatest integer less than or equal to $x$ and $\ceil{x}$ for the least integer greater than or equal to $x$.
The following theorem shows that $\brn$ converges to a \ac{PNE} if $\probeq$ is positive but smaller than a threshold.

\begin{theorem}
[Convergence of $\brn$]
\label{th:BRD} 
	If $\probeq>0$ satisfies
	\begin{equation}\label{eq:convergence}
	\floor*{- \frac 1{\log_2{(1/2 + \probeq/2})}} \le 3,
	\end{equation}
i.e., if $0<\probeq<2^{3/4}-1\approx 0.68$, 
then 
\begin{equation}
\label{eq:P-not-converge-NE}
\sum_{\nplayers=1}^{\infty} \Prob\left(\brn \text{ does not converge to a \ac{PNE}}\right) < \infty. 
\end{equation}
\end{theorem}

Using the first Borel-Cantelli lemma we  conclude that there exists a finite random $\nplayers^*$ such that for all $\nplayers \ge \nplayers^*$ the process $\brn$ converges to a \ac{PNE}. 
In the proof, we  use the fact  that $\brn$ fails to converge to a \ac{PNE} if and only if it enters a trap, which is defined as follows.

\begin{definition}
\label{de:trap}
An oriented graph is \emph{strongly connected} if every vertex is accessible from every other vertex.
A \emph{trap} is a strongly connected subgraph $\orient{\trap}$ of  $\orient{\hyperc}_{\nplayers}^{\probless}$ with two or more vertices,  such that, for all $\actprof \in \vertices(\orient{\trap})$ and all $\actprofalt \notin \vertices(\orient{\trap})$, we have that $\actprofalt$ is not accessible from $\actprof$, where $\vertices(\orient{\trap})$ is  the vertex set of $\orient{\trap}$. 
\end{definition}

For any trap $\orient{\trap}$, the following holds:
\begin{equation}\label{eq:trap}
\Prob\parens*{\exists j \ge \run \text{ such that } \brn_{j}  \notin \vertices(\orient{\trap}) \mid  \brn_{\run} \in  \vertices(\orient{\trap})} = 0.
\end{equation}

Notice that a trap contains at least four vertices, by a simple parity argument, which is described as follows.
Define the Hamming distance $\hamm \colon \actions\times \actions$ as
\begin{equation}
\label{eq:Hamming}
\hamm(\actprof,\actprofalt)\coloneqq \cardd\braces*{\play \colon \act_{\play}\neq\actalt_{\play}},
\end{equation}
which, for binary vectors is equal to $\sum_{\play\in\players}\abs{\act_{\play}-\actalt_{\play}}$. 
Each vertex $\actprof$ of  $\hyperc_{\nplayers}$ is called even if $\hamm(\actprof, \actprofzero)$ is even, and is called odd otherwise.  
If $\actprof$ is even, all its neighbors are odd, and vice versa.  
Hence, the smallest cycles in this graph are of length $4$. 
As a trap must include   a cycle because it is strongly connected in $\orient{\hyperc}_{\nplayers}^{\probless}$, a trap contains at least four vertices.

The simulation result in \cref{fi:traps}  illustrates the phenomenon described by \cref{th:BRD} for $\probeq<0.68$.

\begin{figure}[ht]
	\centering
	\includegraphics[width=10cm]{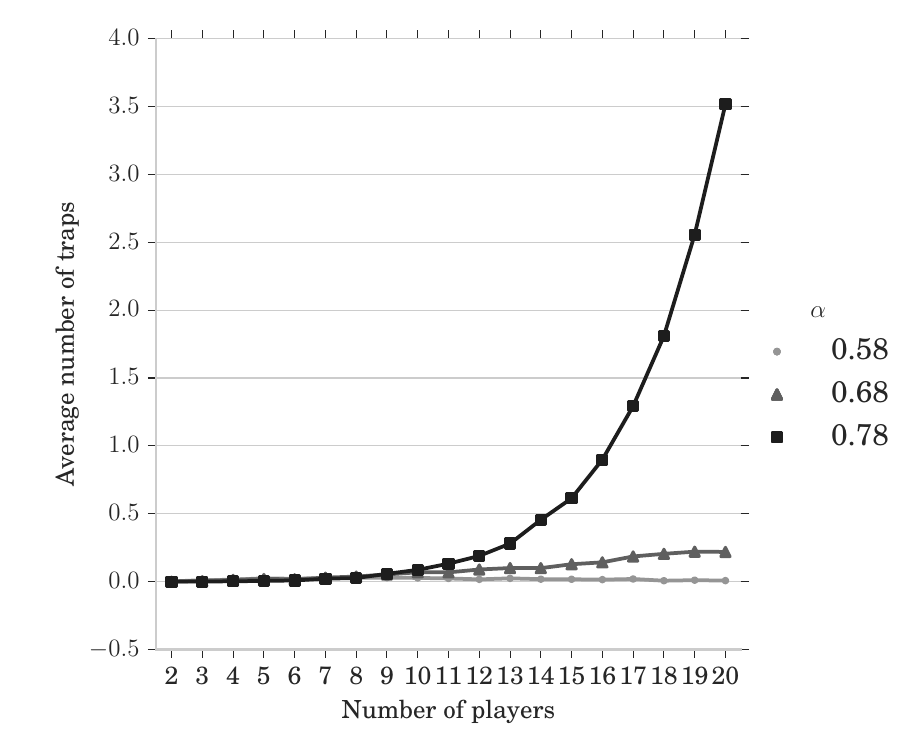}
	\caption{\label{fi:traps} Average number of traps found in games ranging from $\nplayers=2$ to 20 and 
$\probeq = 2^{3/4}-1.1,2^{3/4}-1,2^{3/4}-0.9$, with 1500 trials per combination of $\probeq$ and $\nplayers$.}
\end{figure}

\begin{remark}
\label{re:Poisson}
This paper deals with \acp{PNE} of games in normal form. 
The definition of this concept requires the knowledge of players' preferences over strategy profiles. 
For any two profiles $\actprof,\actprofalt$, with $\actprof\sim_\play\actprofalt$, player $\play$ either prefers $\actprof$ to $\actprofalt$ or prefers $\actprofalt$ to $\actprof$ or is indifferent between the two. 
Any numerical representation of these preferences is invariant with respect to strictly increasing transformations. 
When dealing with random games, given the symmetry induced by the hypothesis of i.i.d.\ payoffs, what is relevant to determine the behavior of \acp{PNE} is the probability that a certain preference exists between two strategy profiles that the player can choose from.
This implies  that payoff distributions that are quite different lead to the same behavior of \acp{PNE} if they have the same parameter $\probeq$. Consider the following examples.

\begin{itemize}
\item 
If the distribution of $\rand$ is a mixture of an absolutely continuous distribution and a degenerate distribution with weights $(1-\mixturepar)$ and $\mixturepar$, respectively, then $\probeq=\mixturepar^{2}$.
\item If $\rand\sim\Bernoulli(\bernoullipar)$, then $\probeq=\bernoullipar^{2}+(1-\bernoullipar)^{2}$. 
\item If $\rand\sim\Poisson(\poissonpar)$, then 
\begin{equation}\label{eq:alpha-Poisson}
\probeq
=\sum_{\run=0}^{\infty}\frac{\expo^{-2\poissonpar}\poissonpar^{2\run}}{(\run!)^{2}}
=\expo^{-2\poissonpar}\Bessel_{0}(2\poissonpar),
\end{equation}
where $\Bessel_{0}$ is the modified Bessel function of the first kind
\citep[Eq.~2, p.~77]{Wat:CUP1995}.
\end{itemize}

For each value of $\bernoullipar\in[0,1]$, there exist $\mixturepar$ and $\poissonpar$ that produce the same value of $\probeq$.
Notice that  if $\rand\sim\Bernoulli(\bernoullipar)$, then the possible values of $\probeq$ are the interval  $[1/2, 1]$, whereas both the mixture and the Poisson distribution can cover the whole interval $(0,1]$ of values of $\probeq$. 
In the case of mixture, also the case $\probeq =0$ can be covered, and the proof follows directly from the definition of the model.\\
For the Poisson case, we reason as follows.  Using the integral representation of the modified Bessel function \citep[Eq.~4, p.~181]{Wat:CUP1995}, we obtain
\begin{equation}
    \frac{\textup{d}}{\textup{d}\poissonpar}\expo^{-2\poissonpar}\Bessel_0(2\poissonpar) = \frac{1}{\pi}\frac{\textup{d}}{\textup{d}\poissonpar}\int_0^\pi \expo^{2\poissonpar(\cos(t)-1)}\diff t = \frac{1}{\pi}\int_0^\pi 2(\cos(t)-1)\expo^{2\poissonpar(\cos(t)-1)}\diff t.
\end{equation}
Clearly the contents of this integral are negative for $0<\poissonpar<\pi$, so $\expo^{-2\poissonpar}\Bessel_{0}(2\poissonpar)$ is decreasing in $\poissonpar$. If $\poissonpar=0 $, then $\probeq = 1$, while for $\poissonpar$ diverging to infinity we have that $\probeq$ converges to $0$. By a simple continuity argument we see  that  all values of $\probeq$ can be covered, and to each $\probeq \in (0,1]$ there exists exactly one value of $\poissonpar$ satisfying \cref{eq:alpha-Poisson}.

\begin{figure}[ht]
	\centering 
	\includegraphics[width=8cm]{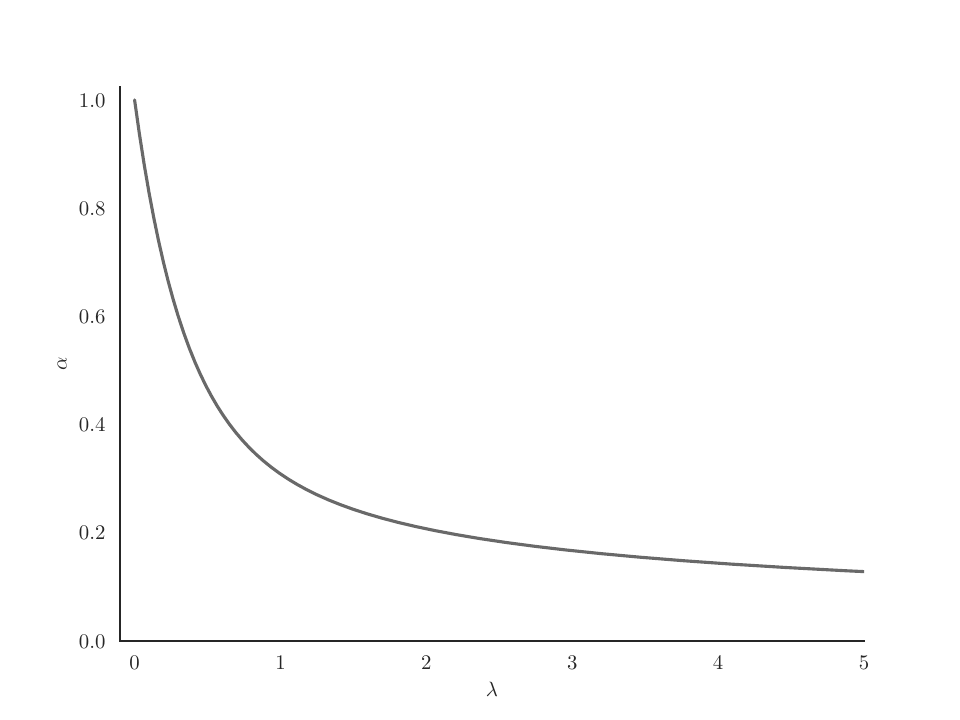}
	\caption{\label{fi:alpha-theta} $\probeq$ as a function of $\poissonpar$.}
\end{figure}
\end{remark}


\section{Bond percolation}
\label{se:percolation}

Most of  the  proofs of the results in \cref{se:number,se:BRD} require some tools from percolation theory. The main step to establish the connection between random games and percolation is to construct a coupling  with the following property. The set of strategies that are  accessible by  $\brn$ coincides with the  connected component containing $\actprofzero$ of a percolation on the hypercube. 
We then use results about the geometry of this cluster to infer limit theorems for the number of Nash equilibria and the \ac{BRD}.
Our results rely on   \citet{McDScoWhi:arXiv2020}. 
Whenever we mention a connected component of the percolation that has a certain property, we refer to the \emph{largest} {connected} component  that satisfies that property.
Moreover, we use the term \emph{giant} component to refer to the connected component with the  largest  number of vertices. 
Of course, this definition makes sense only when such a component is unique, which is the case in the sequel, for all $\nplayers$  large enough. 

Independent  bond percolation on $\hyperc_{\nplayers}$ is defined as follows.  
For each edge in $\hyperc_{\nplayers}$, flip an independent coin having probability $\probless$ of showing heads. 
If the toss shows heads, then declare the edge to be \emph{open}; otherwise the edge is \emph{closed}. 
The random subgraph of $\hyperc_{\nplayers}$ obtained by deleting the closed edges is called percolation; it includes all vertices in $\vertices_{\nplayers}$, but could be disconnected.  
This model allows us to give a detailed description of the geometry of \acp{PNE}.

The next result relates $\orient{\larg}_{\nplayers}^{\probless,\actprofzero}$ to a  percolation on the hypercube.

\begin{proposition}
\label{pr:percolation}
For any $\probless \in [0, 1/2]$ there exists a  percolation  $\percol^{\probless}_{\nplayers}$ that satisfies the following property: the vertex set of its connected component that contains $\actprofzero$, denoted by $\larg^{\probless,\actprofzero}_{\nplayers}$,  coincides with $\orient{\larg}_{\nplayers}^{\probless,\actprofzero}$.  
\end{proposition}

\begin{proof}
First, we define the event
	\begin{equation}\label{eq:ind-s-to-t}
	\orientedges{\actprofr}{\actprofalt}\coloneqq\braces{\overrightarrow{\bracks{\actprofr,\actprofalt}}\in\orient{\edges}_{\nplayers}}.
	\end{equation}
	Since each player has only two strategies, we have that $\orientedges{\actprofr}{\actprofalt}$ is independent of $\orientedges{\actprofu}{\actprofw}$ for every $\orientedges{\actprofu}{\actprofw}\notin \big\{\orientedges{\actprofr}{\actprofalt}, \orientedges{\actprofalt}{\actprofr}\big\}$.
	
	For any subset $\verticesalt \subseteq \vertices_{\nplayers}$, we call $\orient{\boundary \verticesalt}$  the set of vertices  in  $\verticesalt^{c} = \vertices_{\nplayers}\setminus \verticesalt $ 
	that are out-neighbors of some elements in $\verticesalt$, that is,
\begin{equation}\label{eq:out-neigh}
\orient{\boundary\verticesalt}:= \braces*{\actprofw \in  \verticesalt^{c} \colon \exists\,\actprofu\in\verticesalt\text{ such that }\orientedges{\actprofu}{\actprofw}\text{ is true}}
\end{equation}
	and  we call $\boundary \verticesalt$  the set of vertices in $\verticesalt^{c}$ that are neighbors of some elements in $\verticesalt$, that is,  
\begin{equation}\label{eq:neigh}
\boundary\verticesalt :=\braces*{\actprofw \in  \verticesalt^{c} \colon \exists\,\actprofu\in\verticesalt\text{ such that }  \actprofu \sim \actprofw}.
\end{equation}

 Define  the process 
	\begin{equation}\label{eq:Pk+1}
	\expproc_{1}=\braces{\actprofzero}\text{ and, for each }\run\in\N,\
	\expproc_{\run+1}=\expproc_{\run}\cup\orient{\boundary\expproc}_{\run} .
	\end{equation}
We also need to construct a finite sequence of unoriented random graphs such that each graph of the sequence is a bond percolation with parameter $\probless$. 
The last percolation  in this finite sequence has the property that we want, that is, the vertex set  of its  connected component   containing $\actprofzero$  is equal to $\orient{\larg}_{\nplayers}^{\probless,\actprofzero}$.
Start with a bond percolation on $\hyperc_{\nplayers}$ with parameter $\probless$ that is independent of the random variables $\braces*{\rand^{\actprofalt}_{\play}\colon \play\in\players,\actprofalt\in\vertices_{\nplayers}}$. 
Call the resulting graph $\percol_{1}$.	For every $\run\ge 1$ we will update $\percol_{\run}$ by changing the status of some edges at each stage,  in such a way that  $\percol_{\run+1}$ is still a bond percolation with parameter $\probless$. For each edge $\edge \in \edges_{\nplayers}$, we define $\percol_{\run}\braces{\edge}$ the status (open or closed) of edge $\edge$ in $\percol_{\run}$. 
	We obtain  $\percol_{\run+1}$ from  $\percol_{\run}$,  by updating \emph{all and only} the edges  in  $\edges_{\nplayers}$ that connect an element of $\expproc_{\run}$ to an element of $\boundary \expproc_{\run}$. 
	More precisely, for any $\actprofu \in \expproc_{\run}$ and any  $\actprofw \in \boundary  \expproc_{\run}$, with $\actprofu \sim \actprofw$, set
	\begin{equation}\label{eq:Bk+1}
	\percol_{\run+1}\braces{[\actprofu, \actprofw]} = 
	\begin{cases}
	\text{open,} & \text{if }\orientedges{\actprofu}{\actprofw},\\
	\text{closed,} & \text{otherwise};
	\end{cases}
	\end{equation}
	for all other edges $\edge \in \edges_{\nplayers}$, we have $\percol_{\run+1}\braces{\edge} = \percol_{\run}\braces{\edge}$.
	
	Since the status of edges is updated independently of the original configuration and with i.i.d.\ $\Bernoulli(\probless)$ random variables, we have that $\percol_{\run + 1}$ is still a bond percolation with parameter $\probless$.
	
	Notice that, in the worst-case scenario, each of these processes explores the entirety of  $\hyperc_{\nplayers}$ in $2^{\nplayers}$ iterations. 
	That is $\percol_{\run+1} = \percol_{\run}$ and $\expproc_{\run+1}=\expproc_{\run}$ for all $\run \ge 2^{\nplayers}$.
	By construction, $\expproc_{2^{\nplayers}}$  is exactly the set of vertices in the  connected component  of  the percolation graph $\percol_{2^{\nplayers}}$ that contains $\actprofzero$.
	In this context, we have $\orient{\larg}_{\nplayers}^{\probless,\actprofzero} =\expproc_{2^{\nplayers}}$.  Set $\percol^{\probless}_{\nplayers}= \percol_{2^{\nplayers}}$.
	Recall that $\larg^{\probless,\actprofzero}_{\nplayers}$ is the set of vertices in the  connected component  that contains $\actprofzero$ in the percolation graph $\percol^{\probless}_{\nplayers}$. 
By construction, we have $\larg^{\probless,\actprofzero}_{\nplayers} =  \orient{\larg}_{\nplayers}^{\probless,\actprofzero}$. 
\end{proof}

The case $\probless=1/2$ of \cref{pr:percolation} was studied by \citet[Lemma~2.1]{Lin:arXiv2009} in an unpublished manuscript, using different techniques.

\begin{remark}
\label{re:coupling}

We constructed a percolation $\percol^{\probless}_{\nplayers}$ that is coupled with the random partial orientation of the cube, which in turn is induced by a random game. 
This will allow us to identify vertices in the percolation  $\percol^{\probless}_{\nplayers}$ with reference to properties of both the random partially oriented hypercube and the random game.
For instance, we will talk about \acp{PNE} in the percolation. 
\end{remark}

Define  $\larg^{\probless}_{\nplayers}$ to be the set of vertices in the giant component  of the percolation $\percol^{\probless}_{\nplayers}$ introduced in \cref{pr:percolation}.
We call the complement of the giant component, denoted by $\nolarg^{\probless}_{\nplayers}$, the \emph{fragment of the percolation}.
With an abuse of notation, we use $\nolarg^{\probless}_{\nplayers}$ also to denote the set of vertices  of the fragment, i.e., $\vertices_{\nplayers}\setminus\larg^{\probless}_{\nplayers}$.

\begin{proposition}
\label{pr:finally}  
For any  $\probless \in (0, 1/2]$,   we have 
\begin{equation}\label{eq:L=L}
\sum_{\nplayers=1}^\infty 
\Prob\parens*{\larg^{\probless}_{\nplayers} \neq \orient{\larg}^{\probless,\actprofzero}_{\nplayers}}<\infty.
\end{equation}
\end{proposition}

\begin{proof}
We  first prove the proposition for  $\probless \in (0, 1/2)$. 
In virtue of \cref{pr:percolation}, we have  $\orient{\larg}_{\nplayers}^{\probless,\actprofzero} =\larg^{\probless,\actprofzero}_{\nplayers}$, and in order to prove \cref{eq:L=L} it is enough to prove 
\begin{equation}
\label{eq:L=L1}
\sum_{\nplayers=1}^\infty 
\Prob\parens*{\larg^{\probless}_{\nplayers} \neq \larg^{\probless,\actprofzero}_{\nplayers}}<\infty.
\end{equation}
By symmetry, we have
\begin{equation}
\Prob\parens*{\actprofzero \in \nolarg_{\nplayers}^{\probless} \mid \card{\nolarg_{\nplayers}^{\probless}}} = \frac{\card{\nolarg_{\nplayers}^{\probless}}}{2^{\nplayers}}.
\end{equation}
Hence,
\begin{equation}\label{eq:clar1}
\Prob\parens*{\larg^{\probless}_{\nplayers} \neq \larg^{\probless,\actprofzero}_{\nplayers}\mid \card{\nolarg_{\nplayers}^{\probless}}} 
=\Prob\parens*{\actprofzero \in \nolarg_{\nplayers}^{\probless} \mid \card{\nolarg_{\nplayers}^{\probless}}}  = \frac{\card{\nolarg_{\nplayers}^{\probless}}}{2^{\nplayers}}.
\end{equation}

Given two sequences $(a_n)_n$ and $(b_n)_n$, we  say that $a_n = \bigoh(b_n)$ if $ \sup_{n \in \N} a_n/b_n <\infty$. Moreover, we say $a_n = o(b_n)$  if  $\lim_{n \to \infty} a_n/b_n  =0$.  
We make use of the following lemma about upper and lower bounds for the cardinality of the fragment. 

\begin{lemma}[\protect{\citet[Theorem~1(a)]{McDScoWhi:arXiv2020}}]
\label{le:McD1}
For each $\varepsilon > 0$ and for a fixed $\probless\in(0, 1/2)$,  the percolation $\percol_{\nplayers}^{\probless}$ satisfies
\begin{equation}\label{eq:borr1}
    \sum_{\nplayers=1}^\infty \Prob\parens*{\abs*{\card{\nolarg_{\nplayers}^{\probless}} - \Expect\bracks*{\card{\nolarg_{\nplayers}^{\probless}}}} \geq \varepsilon\sqrt{\nplayers (2(1-\probless))^{\nplayers}}} <\infty,
\end{equation}
where $\Expect\bracks*{\card{\nolarg_{\nplayers}^{\probless}}} =(2(1-\probless))^{\nplayers}(1 + \bigoh(\nplayers (1-\probless)^{\nplayers}))$.
\end{lemma}
Set
\begin{equation}
\label{eq:hndef}
\upbound_{\nplayers} \coloneqq \Expect\bracks*{\card{\nolarg_{\nplayers}^{\probless}}} + \varepsilon\sqrt{\nplayers (2(1-\probless))^{\nplayers}}.
\end{equation}
Using \cref{eq:borr1,eq:clar1}, we obtain
\begin{equation}
\label{eq:L=L2}
\begin{split}
\sum_{\nplayers=1}^\infty 
\Prob\parens*{\larg^{\probless}_{\nplayers} \neq \larg^{\probless, \actprofzero}_{\nplayers}} 
&\le\sum_{\nplayers=1}^\infty \Prob\parens*{\braces*{\larg^{\probless}_{\nplayers} \neq \larg^{\probless, \actprofzero}_{\nplayers}} \cap \braces*{\card{\nolarg_{\nplayers}^{\probless}} < \upbound_{\nplayers}}} \\
&\qquad + \sum_{\nplayers=1}^\infty \Prob\parens*{\card{\nolarg_{\nplayers}^{\probless}} \geq \upbound_{\nplayers}}\\
&\le \sum_{\nplayers=1}^\infty \frac{\upbound_{\nplayers}}{2^{\nplayers}} + \sum_{\nplayers=1}^\infty 
\Prob\parens*{\card{\nolarg_{\nplayers}^{\probless}} \geq  \upbound_{\nplayers}}<\infty. 
\end{split}
\end{equation}

Next, we prove the case $\probless=1/2$ of \cref{pr:finally}.
It is well known that there exists a coupling such that if $\probless < \probless'$, then $\larg^{\probless}_{\nplayers} \subseteq \larg^{\probless'}_{\nplayers}$  \citep[see, e.g.,][Theorem~2.1, page~36]{Bol:CUP2001}. 
In other words, $\larg^{\probless}_{\nplayers}$ is monotone in $\probless$. As 
\begin{equation}
\label{eq:equal-random-sets}
\larg^{\probless}_{\nplayers} \neq \larg^{\probless,\actprofzero}_{\nplayers} \Longleftrightarrow \actprofzero \notin \larg^{\probless}_{\nplayers},
\end{equation}
it follows that
\begin{equation}\label{eq:prob-monotonic}
\Prob\parens*{\larg^{\probless}_{\nplayers} \neq {\larg}^{\probless,\actprofzero}_{\nplayers}}
\end{equation}
is nonincreasing in $\probless$. 
For any $\probless \in (0,1/2)$, we have that 
\begin{equation*}
\sum_{\nplayers=1}^\infty 
\Prob\parens*{\larg^{1/2}_{\nplayers} \neq \larg^{1/2,\actprofzero}_{\nplayers}} \le 
\sum_{\nplayers=1}^\infty 
\Prob\parens*{\larg^{\probless}_{\nplayers} \neq \larg^{\probless,\actprofzero}_{\nplayers}}<\infty. \qedhere
\end{equation*}
\end{proof}

The following lemma will be used several times in our proofs.
\begin{lemma}[\protect{\citet[Theorem~2(a)]{McDScoWhi:arXiv2020}}]
\label{le:McD2}
Fix $\probless \in (0, 1/2]$. For any $\radius\in\N$ and any $\actprof\in\vertices_{\nplayers}$, call
\begin{equation}\label{eq:ball}
\ball_{\radius}(\actprof)\coloneqq\braces{\actprofalt\colon\hamm(\actprof,\actprofalt)\le\radius},
\end{equation}
where $\hamm$ is the Hamming distance. Set 
\begin{equation}\label{eq:mbeta}
\floorm_{\probless}\coloneqq\floor*{\frac{1}{-\log_2(1-\probless)}}.
\end{equation}
Then there exists $\overline{\delta}>0$ such that, for any $\delta < \overline{\delta}$, we have 
\begin{equation}
\label{eq:sum-McD}
\sum_{\nplayers=1}^{\infty}
\Prob\parens*{\exists\; \actprofalt \colon \card{{\ball_{\ceil{\delta \nplayers}}}(\actprofalt)\setminus\orient{\larg}_{\nplayers}^{\probless,\actprofzero}} > \floorm_{\probless}} < \infty.
\end{equation}
\end{lemma}

\begin{remark} Notice that \citet[Theorem~2(a)]{McDScoWhi:arXiv2020} is actually formulated in terms of giant component of  $\percol^{\probless}_{\nplayers}$, i.e., $\larg^{\probless}_{\nplayers}$ instead of $\orient{\larg}_{\nplayers}^{\probless,\actprofzero}$. 
This substitution holds in virtue  of \cref{pr:percolation} and  \cref{pr:finally}. 
Moreover, \citet[Theorem~2(a)]{McDScoWhi:arXiv2020} covers only the case $\probless \in (0, 1/2)$. 
On the other hand, the case $\probless =1/2$ follows immediately using monotonicity of percolation with respect to $\probless$.  
In fact,  choose $\probless<1/2$ such that $\floorm_{\probless} = \floorm_{1/2}=1$. 
By monotonicity, we have that $\larg^{\probless, \actprofzero}_{\nplayers}$ is stochastically smaller than $\larg^{1/2,  \actprofzero}_{\nplayers}$. 
Hence
\begin{equation}
\label{eq:prob-L-1/2-beta}
\Prob\parens*{\exists\; \actprofalt \colon \card{\ball_{\ceil{\delta \nplayers}}(\actprofalt)\setminus\orient{\larg}_{\nplayers}^{1/2,\actprofzero}} > 1}  
\le \Prob\parens*{\exists\; \actprofalt \colon \card{\ball_{\ceil{\delta \nplayers}}(\actprofalt)\setminus\orient{\larg}_{\nplayers}^{\probless,\actprofzero}} > 1}.
\end{equation}
\end{remark}

The next result  focuses on the nonatomic case, that is, $\probless = 1/2$ (or equivalently, $\probeq=0$). This corresponds to the classical bond percolation with parameter $1/2$. 
A vertex is   \emph{isolated } if it has degree zero in the graph induced by the percolation.

\citet{ErdSpe:CMA1979} analyzed the asymptotic behavior of $\percol^{1/2}_{\nplayers}$, and showed that  the random graph is connected with probability tending to one. 
Upon further inspection of their proof, it is evident that, with probability tending to one, the giant component of this percolation contains all the vertices in $\vertices_{\nplayers}$ with the exception of some isolated vertices  in the random graph $\percol^{1/2}_{\nplayers}$.
As the following result was not explicitly stated in \citet{ErdSpe:CMA1979}, we include a proof for the sake of clarity.

\begin{proposition}
[\protect{\citet{ErdSpe:CMA1979}}]
\label{pr:largest-connected-component}
Let $\probeq=0$, i.e., $\probless=1/2$,  let $\isoln_{\nplayers}$ be the set of isolated vertices in $\percol^{1/2}_{\nplayers}$, and recall that  $\nolarg^{1/2}_{\nplayers}$ is its fragment.
Then
\begin{equation}
\label{eq:P-M-Y}
    \sum_{\nplayers =1}^{\infty} \Prob\left(\nolarg^{1/2}_{\nplayers} \neq \isoln_{\nplayers}\right) <\infty.
\end{equation}
Moreover,
\begin{equation}
\label{eq:is}
\lim_{\nplayers \to \infty} \Prob\parens*{\card{\isoln_{\nplayers}} = \run} = \frac{\expo^{-1}}{\run!},\qquad \run \in \N.
\end{equation}
\end{proposition}

\begin{proof}
Choose $\delta$ as in \cref{le:McD2}. If  $\{\nolarg^{1/2}_{\nplayers} \neq \isoln_{\nplayers}\}$ then  there exists $\actprof$ such that $\card{\ball_{\ceil{\delta \nplayers}}(\actprof)\setminus {\larg}_{\nplayers}^{1/2,\actprofzero}} > 1$. 
Combining  \cref{le:McD2} (case $\probless =1/2$) with \cref{pr:percolation} we have
\begin{equation}
\label{eq:sumPcard}
\sum_{\nplayers} \Prob\parens*{\exists\; \actprofalt \colon \card{\ball_{\ceil{\delta \nplayers}}(\actprofalt)\setminus {\larg}_{\nplayers}^{1/2,\actprofzero}} > 1}   < \infty.
\end{equation}

This proves \cref{eq:P-M-Y}. To prove \cref{eq:is}  we reason as follows. 
The \acp{PNE} are the vertices  of $\orient{\hyperc}_{\nplayers}$ that are incident only to incoming edges.  
When $\probeq=0$, the number of \acp{PNE} has the same distribution as the  number of  vertices in $\orient{\hyperc}_{\nplayers}$ which are incident only to outgoing edges.  
Call $\orient{\isoln}_{\nplayers}$ the  set  of such vertices. 
It is well-known that, when $\probeq=0$, the number of \acp{PNE} converges in distribution to  a $\Poisson(1)$ (see \citet{ArrGolGor:AP1989,RinSca:GEB2000}). 
Hence, 
\begin{equation}
\label{eq:lim-P-card-Y}
\lim_{\nplayers \to \infty} \Prob\parens*{\card{\orient{\isoln}_{\nplayers}} = \run} = \frac{\expo^{-1}}{\run!}, \quad \run \in \N.
\end{equation}
We will prove that $\orient{\isoln}_{\nplayers} = {\isoln}_{\nplayers}$ for all  $\nplayers$ large enough.
If  $\actprofzero \notin \orient{\isoln}_{\nplayers}$ then all the vertices in  $\orient{\isoln}_{\nplayers}$ are not accessible from  $\actprofzero$.  
As 
\begin{equation}
\label{eq:sum1/2N}
\sum_{\nplayers=1}^\infty  \Prob(\actprofzero \in \orient{\isoln}_{\nplayers}) = \sum_{\nplayers=1}^\infty \frac{1}{2^{\nplayers}} <\infty,
\end{equation}
we can use the first Borel-Cantelli lemma  to conclude that for all $\nplayers$ large enough,  $\orient{\isoln}_{\nplayers} \subseteq  \orient\nolarg^{1/2,\actprofzero}_{\nplayers}$. 
    Using \cref{pr:percolation},  we have that $\orient{\isoln}_{\nplayers}\subseteq \vertices_{\nplayers} \setminus {\larg}_{\nplayers}^{1/2,\actprofzero}$. 
    Using \cref{pr:finally} and the first Borel-Cantelli lemma, we have that for   $\nplayers$ large enough, $\orient{\isoln}_{\nplayers} \subseteq \nolarg^{1/2}_{\nplayers}$. 
    Conversely, each isolated point $\actprof$ in $\percol^{1/2}_{\nplayers}$  whose neighbors are all in ${\larg}_{\nplayers}^{1/2,\actprofzero}$ satifies $\actprof \in \orient{\isoln}_{\nplayers}$.
    Hence, $\orient{\isoln}_{\nplayers} \neq \nolarg^{1/2}_{\nplayers}$ only if there exist two elements $\actprof, \actprofalt\in\nolarg^{1/2}_{\nplayers}$, such that $\hamm(\actprof, \actprofalt)=1$, i.e., $\actprof$ and $\actprofalt$ are neighbors in $\hyperc_{\nplayers}$. 
    Using \cref{eq:sumPcard}, we have
\begin{equation}
\label{eq:sum-P-Y-M}
\sum_{\nplayers=1}^\infty  \Prob\parens*{\orient{\isoln}_{\nplayers} \neq \nolarg^{1/2}_{\nplayers}}<\infty.
\end{equation}
Combining \cref{eq:sum-P-Y-M}  with \cref{eq:P-M-Y},  we have
\begin{equation}
\label{eq:boh-60}
\sum_{\nplayers=1}^\infty  \Prob\parens*{\orient{\isoln}_{\nplayers} \neq {\isoln}_{\nplayers}}<\infty.
\end{equation}
Using the first Borel-Cantelli lemma and \cref{eq:lim-P-card-Y} we prove \cref{eq:is}.
\end{proof}


\section{Proofs}
\label{se:proofs}


\subsection*{Proof of \cref{se:number}}

\begin{proof}
[Proof of \protect{\cref{th:SNE}}]
\noindent
\ref{it:th:SNE-1}
As we mentioned before, when $\probeq=0$, convergence of the number of \acp{PNE}  to a $\Poisson(1)$ was proved by \citet{ArrGolGor:AP1989,RinSca:GEB2000}. Moreover, since almost surely no two payoffs are equal, we have that each \ac{PNE} is also an \ac{SPNE}. 
	
\noindent
\ref{it:th:SNE-2}	Now we focus on the case $\probeq>0$ and prove that the number of \acp{SPNE} in $\rgame_{\nplayers}$ is zero  for all large $\nplayers$, $\Prob$-a.s..  
	Notice that $\probeq>0$ implies that $\probless<1/2$.
	
We have
\begin{equation}
\label{eq:card-SPNE}
\card{\sequilibria(\rgame_{\nplayers})}=\sum_{\actprof\in\vertices_{\nplayers}}\mathds{1}_{\actprof\in\sequilibria(\rgame_{\nplayers})},
\end{equation}
where $\mathds{1}_{\event}$ denotes the indicator function of the event $\event$.
Since $\Prob\parens*{\actprof\in\sequilibria(\rgame_{\nplayers})}=\probless^{\nplayers}$ for every $\actprof\in\vertices_{\nplayers}$, 
we have
\begin{equation}
\label{eq:E-card-SPNE}
\Expect\bracks{\card{\sequilibria(\rgame_{\nplayers})}}=(2\probless)^{\nplayers}.
\end{equation}
	
\noindent Markov's inequality implies
	\begin{equation}\label{eq:Markov-SNE}
	\Prob\parens*{\card{\sequilibria(\rgame_{\nplayers})}\ge 1} \le \Expect\bracks*{\card{ \sequilibria(\rgame_{\nplayers})}}= (2\probless)^{\nplayers}.
	\end{equation}
Since $2\probless<1$, the upper bound goes to zero geometrically fast.
Hence, 
\begin{equation}
\label{eq:sum-prob-SNE}
\sum_{\nplayers=1}^\infty \Prob (\card{{\sequilibria(\rgame_{\nplayers}}})\ge 1) < \infty.
\end{equation}
	Using the first Borel-Cantelli Lemma, we have that the event $\braces{\card{{\sequilibria(\rgame_{\nplayers}}})< 1}$ holds true for all  $\nplayers$ large enough. 
	As the cardinality is a non-negative integer, it must be zero for all large $\nplayers$.
\end{proof}

Given two probability measures $\Prob,\Probalt$ on $\N$, their total variation distance, $\dist_{\TV}$ is defined as
\begin{equation}
\label{eq:TV}
\dist_{\TV}(\Prob,\Probalt) \coloneqq \sup_{\event\subseteq\N}\abs{\Prob(\event)-\Probalt(\event)}.
\end{equation}
With an abuse of language, we will often speak of total variation distance of two random variables to indicate the total variation distance of their laws. 

\begin{remark}\label{re:ord}
Recall the definition of the Kolmogorov distance $\kolmo$, given in \cref{eq:Kolmogorov}. 
We have that 
\begin{equation}
\label{eq:Kolmogorov-TV} 
\kolmo(\Prob,\Probalt) \le \dist_{\TV}(\Prob,\Probalt).
\end{equation}
In fact, the first can be seen as the supremum over the collection of sets $\event = [0, \run]$ for $\run \in \N$ while the variational distance is a supremum over a richer collection of sets.
\end{remark}

To prove \cref{th:case-atoms} we will make use of the following lemma. 
\begin{lemma}[\protect{\citet[Theorem~4(c)]{McDScoWhi:arXiv2020}}]
\label{le:TV-Poisson}
Call $\comp_{\ncomp}$ the number of  connected components with exactly $\ncomp$  vertices in the percolation $\percol^{\probless}_{\nplayers}$ and let $\ecomp_{\ncomp} \coloneqq \Expect\bracks{\comp_{\ncomp}}$.
Then for $\probless \in (0, 1/2)$, we have
\begin{equation}
\label{eq:TV-Poisson}
\dist_{\TV}\parens*{\comp_{\ncomp},\Poisson(\ecomp_{\ncomp})} = \bigoh(\nplayers^{\ncomp}(1-\probless)^{\ncomp\nplayers}).
\end{equation}
\end{lemma}

\begin{proof}[Proof of \cref{th:case-atoms}]
Consider the random partially oriented hypercube $\orient{\hyperc}_{\nplayers}^{\probless}$ and invert the orientation of each edge, keeping the unoriented edges unoriented.  
The new random partially oriented hypercube $\rorient{\hyperc}_{\nplayers}^{\probless}$ has the same law as the original one.
Moreover, a vertex is a \ac{PNE} in $\orient{\hyperc}_{\nplayers}^{\probless}$ if and only if it is an isolated point in $\rorient{\hyperc}_{\nplayers}^{\probless}$. 

Consider the percolation associated to $\rorient{\hyperc}_{\nplayers}^{\probless}$ (see \cref{pr:percolation}). In virtue of our reasoning above, it is enough to establish a limit theorem for the number of isolated vertices of this percolation.
Resorting to \cref{le:TV-Poisson} with $\ncomp=1$, we obtain that there exists a constant $\consta_{1,\probeq}$, which depends on $\probeq$, but not on $\nplayers$, such that
\begin{equation}
\label{eq:TV-PNE-Poisson}
\dist_{\TV}\parens*{\card{\equilibria(\rgame_{\nplayers})},\Poisson(2^{\nplayers}\parens{1-\probless}^{\nplayers})}
\le \consta_{1,\probeq}(\nplayers(1-\probless)^{\nplayers}),
\end{equation}
which implies 
(see \cref{re:ord})
\begin{equation}
\label{eq:kolmoPNE-Poisson}
\kolmo\parens*{\card{\equilibria(\rgame_{\nplayers})},\Poisson(2^{\nplayers}\parens{1-\probless}^{\nplayers})}
\le \consta_{1,\probeq}(\nplayers(1-\probless)^{\nplayers}).
\end{equation}
Notice that $\Poisson(2^{\nplayers}\parens{1-\probless}^{\nplayers})$ can be expressed as the sum of $\floor{2^{\nplayers}\parens{1-\probless}^{\nplayers}}$ independent Poisson with parameter one, plus a small remainder.
In this context, we can use the Berry-Esseen theorem \citep[see, e.g.,][Theorem~3.4.17]{Dur:CUP2019}, which implies that
\begin{equation}
\label{eq:Poisson-Normal}
\kolmo\parens*{\Poisson(2^{\nplayers}\parens{1-\probless}^{\nplayers}), \Norm(2^{\nplayers}\parens{1-\probless}^{\nplayers},2^{\nplayers}\parens{1-\probless}^{\nplayers})} \le \frac{\consta_{2,\probeq}}{2^{\nplayers/2}\parens{1-\probless}^{\nplayers/2}}.
\end{equation}
For simplicity, we make the substitution $2(1-\probless) = 1+\probeq$. 
Using the triangular inequality we obtain
\begin{equation}
\label{eq:triangular-ineq}
\begin{split}
&\kolmo\parens*{\card{\equilibria(\rgame_{\nplayers})}, \Norm((1+\probeq)^{\nplayers},(1+\probeq)^{\nplayers})}\\
&\qquad\le \kolmo\parens*{\card{\equilibria(\rgame_{\nplayers})},\Poisson(\parens{1+\probeq}^{\nplayers})}\\ &\qquad\qquad+ \kolmo\parens*{\Poisson(\parens{1+\probeq}^{\nplayers}), \Norm(\parens{1+\probeq}^{\nplayers},\parens{1+\probeq}^{\nplayers})}\\
&\qquad\le 	\consta_{1,\probeq}\parens*{\nplayers\parens*{\frac{1+\probeq}{2}}^{\nplayers}}
+\frac{\consta_{2,\probeq}}{\parens{1+\probeq}^{\nplayers/2}}\\ 
&\qquad\le \nplayers\consta_{\probeq}\max\parens*{\frac{\parens{1+\probeq}}{2},\frac{1}{\parens{1+\probeq}^{1/2}}}^{\nplayers}.
\end{split}
\end{equation}
\end{proof}

\begin{proof}[Proof of \cref{th:geo}]
We first prove that 
\begin{equation}\label{eq:liminf}
\Prob\left(\liminf_{\nplayers\to\infty} \frac{\card{\equilibria(\rgame_{\nplayers})}}{(1+\probeq)^{\nplayers}}\ge 1\right) = 1.
\end{equation}
Notice that \cref{th:case-atoms} implies that the number of \ac{PNE} grows geometrically when $\probeq>0$. 
More precisely, we can show that,  for any $\varepsilon>0$ small enough,  \cref{eq:atoms-wvhp-2} implies that  
\begin{equation}\label{eq:ineq-N-equil1}
\sum_{\nplayers=1}^\infty \Prob\parens*{\card{\equilibria(\rgame_{\nplayers})} < (1+\probeq)^{\nplayers} -(1 + \varepsilon)^{\nplayers} (1+\probeq)^{\nplayers/2}} < \infty.
\end{equation}
Fix $\varepsilon >0$ such that 
\begin{equation*}
(1 + \varepsilon) < (1+\probeq)^{1/2}.
\end{equation*}
With this choice of $\varepsilon$, the right-hand side of \cref{eq:ineq-N-equil1} is $(1+\probeq)^{\nplayers}  (1 + o(1))$. Hence, by establishing \cref{eq:ineq-N-equil1} we would prove that the number of \ac{PNE} grows like $(1+\probeq)^\nplayers$.
 Recall the standard inequality   $ \normCDF(-x) \le \phi(x)/x $, valid for $x>0$, where $\phi$  is the density of a standard normal. We then have that 
\begin{equation*}
\normCDF(-(1+ \varepsilon)^{\nplayers})  \le \consta_{3} (1 + \varepsilon)^{-\nplayers}
\end{equation*}
for some constant $\consta_{3} >0$. Hence,
\begin{equation}
\label{eq:sum-card-PNE-ineq}
\begin{split}
&\quad\sum_{\nplayers=1}^\infty \Prob\parens*{\card{\equilibria(\rgame_{\nplayers})} 
\le (1+\probeq)^{\nplayers} -(1 + \varepsilon)^{\nplayers} (1+\probeq)^{\nplayers/2}}\\  
&=\sum_{\nplayers=1}^\infty \bracks*{\Prob\parens*{\card{\equilibria(\rgame_{\nplayers})}
\le (1+\probeq)^{\nplayers} -(1 + \varepsilon)^{\nplayers} (1+\probeq)^{\nplayers/2}}
 - \normCDF\parens*{-(1+ \varepsilon)^{\nplayers}}}\\
&\qquad+ \sum_{\nplayers=1}^\infty\normCDF(-(1+ \varepsilon)^{\nplayers})\\
&\le \sum_{\nplayers=1}^\infty \nplayers\consta_{\probeq}\max\parens*{\frac{\parens{1+\probeq}}{2},\frac{1}{\parens{1+\probeq}^{1/2}}}^{\nplayers}  +  \sum_{\nplayers=1}^\infty  \frac {\consta_{3}}{ (1 + \varepsilon)^{\nplayers}} < \infty.
\end{split}
\end{equation}

Similarly, we can prove the following 
   \begin{equation}\label{eq:ineq-N-equil1-2}
\sum_{\nplayers=1}^\infty \Prob\parens*{\card{\equilibria(\rgame_{\nplayers})} > (1+\probeq)^{\nplayers} +(1 + \varepsilon)^{\nplayers} (1+\probeq)^{\nplayers/2}} < \infty.
\end{equation}
This is achieved by repeating the argument above with the choice of $x_{\nplayers} = (1+ \varepsilon)^{\nplayers}$ and using the symmetry of normal distribution, i.e., $1- \normCDF(x) = \normCDF(-x)$.
In turn, \cref{eq:ineq-N-equil1-2} implies that 
\begin{equation}\label{eq:limsup}
\Prob\left(\limsup_{\nplayers \to \infty} \frac{\card{\equilibria(\rgame_{\nplayers})}}{(1+\probeq)^{\nplayers}}\le 1\right) = 1,
\end{equation}
which, together with \cref{eq:liminf}, ends the proof.
\end{proof}


\subsection*{Proofs of \cref{se:BRD}}

\begin{proof}
[Proof of \cref{th:accessibility}]
\noindent
\ref{it:th:accessibility-1}
First we deal with the case $\probeq =0$, which corresponds to $\probless = 1/2$, i.e., $\floorm_{\probless}=1$. 
In this case, there are no unoriented edges in $\orient{\hyperc}_{\nplayers}^{\probless}$.  
If a \ac{PNE} belongs to $ \orient\nolarg^{1/2,\actprofzero}_{\nplayers}$, then all its neighbors must also belong to $\orient\nolarg^{1/2,\actprofzero}_{\nplayers}$.   
In fact,  if one of these neighbors were accessible from $\actprofzero$, then also the \ac{PNE} would  be accessible  from $\actprofzero$, i.e., it would belong to $\orient\larg^{1/2,\actprofzero}_{\nplayers}$.
This means that there exists a ball of radius $1$ with $\nplayers+1$ vertices in $ \orient\nolarg^{1/2,\actprofzero}_{\nplayers}$. 
Hence, by \cref{le:McD2} (case $\probless =1/2$), we have
\begin{equation}
\label{eq:P-NE-L}
\begin{split}
\sum_{\nplayers =1}^{\infty}&\Prob\parens*{\equilibria(\rgame_{\nplayers})
\not\subseteq\orient{\larg}_{\nplayers}^{1/2,\actprofzero}}\le \sum_{\nplayers =1}^{\infty} \Prob\parens*{\exists\; \actprofalt \colon \card{\ball_{\ceil{\delta \nplayers}}(\actprofalt)\setminus\orient{\larg}_{\nplayers}^{1/2,\actprofzero}} > \nplayers} <\infty.
\end{split}
\end{equation}

 We now consider the case $0<\probeq < 1/2$.
	Let $\incid_{\nplayers}$ be the number of vertices that are incident to at least $\nplayers-\floorm_{\probless}$ unoriented edges.
	Markov's inequality yields 
\begin{equation}
\label{eq:ineq-incident}
\Prob\parens{\incid_{\nplayers}\ge 1}
	\le 2^{\nplayers}\sum_{\run=\nplayers-\floorm_{\probless}}^{\nplayers}\binom{\nplayers}{\run} \probeq^{\run}(1-\probeq)^{\nplayers - \run}
	\le \consta_{4,\probeq} \nplayers^{\floorm_\probless} 2^{\nplayers} \probeq^{\nplayers-\floorm_{\probless}},
\end{equation}
for some constant $\consta_{4,\probeq}$.

Hence, as $\probeq<1/2$,
\begin{equation}
	\sum_{\nplayers=1}^\infty \Prob\parens{\incid_{\nplayers}\ge 1}<\infty.
\end{equation}
Suppose that $\actprof$ is  a \ac{PNE} and belongs  to  $\orient\nolarg^{\beta,\actprofzero}_{\nplayers}$.  
Then, either:
\begin{enumerate}[(i)]
\item
\label{it:fragment-1}
the vertex $\actprof$ is adjacent to more than $\floorm_{\probless}$ oriented edges, which implies that the size of the  connected component of $ \orient\nolarg^{\beta,\actprofzero}_{\nplayers}$ containing $\actprof$ is larger than $\floorm_{\probless}$, or
\item	    
\label{it:fragment-2}
the vertex $\actprof$ is adjacent to at most $\floorm_{\probless}$ oriented edges.
\end{enumerate}
	
	When \ref{it:fragment-2} holds, we necessarily have that $\incid_{\nplayers}\ge 1$. Fix  $\delta>0$ as in \cref{le:McD2}.
	Hence,
\begin{multline*}
\sum_{\nplayers =1}^{\infty}\Prob\parens*{\equilibria(\rgame_{\nplayers})
\not\subseteq\orient{\larg}_{\nplayers}^{\probless,\actprofzero}}\\
\le \sum_{\nplayers=1}^{\infty} \Prob\parens*{\exists\; \actprofalt \colon \card{\ball_{\ceil{\delta \nplayers}}(\actprofalt) \setminus\orient{\larg}_{\nplayers}^{\probless,\actprofzero}} > \floorm_{\probless}}
+ \sum_{\nplayers=1}^\infty \Prob\parens{\incid_{\nplayers}\ge 1 }<\infty.
\end{multline*}

\noindent
\ref{it:th:accessibility-2}
	For this case, we  introduce a different percolation $\fairperc{\percol}_{\nplayers}^{\probless}$ on $\hyperc_{\nplayers}$ which is defined below. 
	This percolation is related to $\percol^{\probless}_{\nplayers}$, as defined in \cref{pr:percolation}. 
	For any pair of vertices $\actprofr,\actprofalt\in\vertices_{\nplayers}$, we declare the edge $\bracks{\actprofr,\actprofalt}$ open in $\fairperc{\percol}_{\nplayers}^{\probless}$ if	${\orientedges{\actprofr}{\actprofalt} \cup \orientedges{\actprofalt}{\actprofr}}$
{holds true, that is,} the edge connecting the two profiles $\actprofr$ and $\actprofalt$ is oriented in $\orient{\hyperc}_{\nplayers}^{\probless}$. 
	Otherwise the edge $\bracks{\actprofr,\actprofalt}$ is declared closed in $\fairperc{\percol}_{\nplayers}^{\probless}$.
	Since $\probeq =1/2$, the parameter of the percolation $\fairperc{\percol}_{\nplayers}^{\probless}$ is also  $1/2$ and we are in the framework studied in \citet{ErdSpe:CMA1979}, so we can apply \cref{pr:largest-connected-component}.

	Call $\fairperc{\larg}_{\nplayers}^{\probless,\actprofzero}$ the  connected component of   $\fairperc{\percol}_{\nplayers}^{\probless}$ that contains $\actprofzero$.
	Any isolated vertex in $\fairperc{\percol}_{\nplayers}^{\probless}$ is a \ac{PNE} in $\orient{\hyperc}_{\nplayers}^{\probless}$, as it is incident only to non-oriented edges, which in turn implies that each player has no incentive to deviate. 
	Using \cref{pr:largest-connected-component}, we have that the number of \acp{PNE} outside $\fairperc{\larg}_{\nplayers}^{\probless,\actprofzero}$ is asymptotically a $\Poisson(1)$ random variable.  
Notice that $\larg_{\nplayers}^{\probless,\actprofzero}\subseteq \fairperc{\larg}_{\nplayers}^{\probless,\actprofzero}$. 
This is because any edge that is open in $\percol^{\probless}_{\nplayers}$ is also open in $\fairperc{\percol}_{\nplayers}^{\probless}$. 
Using \cref{pr:percolation}, we have that $\orient{\larg}_{\nplayers}^{\probless,\actprofzero}\subseteq \fairperc{\larg}_{\nplayers}^{\probless,\actprofzero}$.
	Hence, the number of \acp{PNE} outside $\orient{\larg}_{\nplayers}^{\probless,\actprofzero}$ is stochastically larger than a $\Poisson(1)$ random variable.

	\noindent\ref{it:th:accessibility-3}
	For  $\actprofalt\in\vertices_{\nplayers}$, define $\equi{\actprofalt}$ to be the  event that vertex $\actprofalt$ is incident only to unoriented edges  in $\orient{\hyperc}_{\nplayers}^{\probless}$, and 
	\begin{equation}\label{eq:sum-isol}
	\equii{\nplayers}=\sum_{\actprofalt\in\vertices_{\nplayers}}\mathds{1}_{\equi{\actprofalt}}.
	\end{equation}
As we showed  after \cref{de:trap}, $\hyperc_{\nplayers}$ can be decomposed as
	\begin{equation}\label{eq:bipartite}
	\vertices_{\nplayers}=\vertices_{\nplayers}^{\even} \disjcup \vertices_{\nplayers}^{\odd},
	\end{equation}
	where $\vertices_{\nplayers}^{\even}$ is the set of vertices for which the sum of coordinates is even and $\vertices_{\nplayers}^{\odd}$ is the set of vertices for which the sum of coordinates is odd. 
	Edges connect only vertices from different components, so no pair of vertices in $\vertices_{\nplayers}^{\even}$ (or in $\vertices_{\nplayers}^{\odd}$) can be neighbors.

	Obviously $\card{\vertices_{\nplayers}^{\even}}=\card{\vertices_{\nplayers}^{\odd}}=2^{\nplayers-1}$.
	Our  first goal is to prove the following result.
\begin{lemma}
\label{le:independent-V-even}
The class $\braces{\equi{\actprofalt}:\actprofalt\in\vertices_{\nplayers}^{\even}}$  is a collection of independent events.
\end{lemma}

\begin{proof}
    	The event $\equi{\actprof}$ depends only on the payoffs at $\actprof$ and at each of its neighbors.
    	It is enough to prove that, for every subset $\subs\subseteq\vertices_{\nplayers}^{\even}$, we have
    	\begin{equation}
    	\label{eq:indepI}
    	\Prob\parens*{\bigcap_{\actprof\in\subs}\equi{\actprof}}=\prod_{\actprof\in\subs}\Prob\parens*{\equi{\actprof}}.
    	\end{equation}
    	Fix $\subs$ and $\actprofalt\in\subs$ and define $\subs_{-\actprofalt}\coloneqq\subs\setminus\braces{\actprofalt}$.
    	We need to prove that
    	\begin{equation}\label{eq:indepIalt}
    	\Prob\parens*{\bigcap_{\actprof\in\subs}\equi{\actprof}}
    	=\Prob\parens*{\equi{\actprofalt}\bigcap_{\actprof\in\subs_{-\actprofalt}}\equi{\actprof}}
    	=\Prob\parens*{\equi{\actprofalt}}\Prob\parens*{\bigcap_{\actprof\in\subs_{-\actprofalt}}\equi{\actprof}}.
    	\end{equation}	
    	The set of profiles in $\subs_{-\actprofalt}$ that share a neighbor with $\actprofalt$ has cardinality at most $\binom{\nplayers}{2}$.
    	If this set is empty, then \cref{eq:indepIalt} trivially holds.
    	Otherwise, for $\play\in\players$, let $\actprof^{\play\playalt}\in\subs_{-\actprofalt}$ and $\actprofw^{\play}$ be such that  $\actprofw^{\play}\sim_{\play}\actprofalt$ and  $\actprof^{\play\playalt}\sim_{\playalt}\actprofw^{\play}$, with $\play\neq\playalt$.
    	If, for some $\play$, the event $\equi{\actprof^{\play\playalt}}$ is true, then $\rand_{\playalt}^{\actprof^{\play\playalt}}=\rand_{\playalt}^{\actprofw^{\play}}$, and this event  is independent of $\rand_{\play}^{\actprofw^{\play}}$.
    	Therefore the class of events $\braces{\equi{\actprof^{\play\playalt}}}_{\play\in\players}$ is independent of the class of random variables $\braces{\rand_{\play}^{\actprofw^{\play}}}_{\play\in\players}$.
    	Since the event $\equi{\actprofalt}$ depends only on $\braces{\rand_{\play}^{\actprofw^{\play}}}_{\play\in\players}$ and $\rand_{\play}^{\actprofalt}$, we have that $\equi{\actprofalt}$ is independent of $\braces{\equi{\actprof^{\play\playalt}}}_{\play\in\players}$.
    	Moreover,  $\equi{\actprofalt}$ is independent of $\equi{\actprof}$ for all $\actprof\in\subs_{-\actprofalt}$. This ends the proof of  \cref{le:independent-V-even}.
\end{proof}
	
	As $\equii{\nplayers} \geq \sum_{\actprofalt\in\vertices_{\nplayers}^{\even}} \mathds{1}_{\equi{\actprofalt}}$, \cref{le:independent-V-even} implies that $\equii{\nplayers}$ is stochastically larger than a $\Binomial\parens*{2^{\nplayers-1},\probeq^{\nplayers}}$.
Each vertex $\actprofalt$ that is incident only to unoriented edges has the following properties:
\begin{itemize}
	\item
	it is a \ac{PNE}; and
		
	\item it lies in $\orient{\nolarg}_{\nplayers}^{\probless,\actprofzero}$,  unless $\actprofalt = \actprofzero$.
\end{itemize}
Hence, we have that  for any fixed $\consta >0$, 
	\begin{equation*}
    	\lim_{\nplayers\to\infty}\Prob\parens*{\card{\equilibria(\rgame_{\nplayers})\cap\orient{\nolarg}_{\nplayers}^{\probless,\actprofzero}}>\consta}=1.  \qedhere
	\end{equation*}	
\end{proof}

Recall  the definition of trap, given in \cref{de:trap}.
Notice that a trap  does not contain any \ac{PNE}. This is  because it is strongly connected whereas no  vertex is accessible from a \ac{PNE}.

\begin{proof}[Proof of \cref{th:BRD}]
The process $\brn$ does not converge to a \ac{PNE} if and only if it visits a trap.
Hence, it is enough to prove the  following stronger result about the existence of  traps 
\begin{equation}
\label{eq:trap-less-infty}
\sum_{\nplayers=1}^\infty \Prob(\exists\text{  a trap in }\orient{\hyperc}_{\nplayers}^{\probless}) 
< \infty.
\end{equation}
To prove \cref{eq:trap-less-infty}, we  first  study some properties of traps. 
Fix a trap $\orient{\trap}$. 
Each edge connecting $\orient{\trap}$ to its boundary $\boundary\orient{\trap}$ is either unoriented or points towards $\orient{\trap}$ (see the left of  \cref{fi:reverse-trap}).

	\begin{figure}[ht]
	    \centering
	    \begin{tikzpicture}[scale=0.5]
	        \node (0) at (0,0) {};
        	\node (1) at (4,0) {};
        	\node (2) at (0,4) {};
        	\node (3) at (4,4) {};
        	\node (4) at (-1.4, -1.4) {};
            \node (5) at (5.4, -1.4) {};
            \node (6) at (-1.4, 5.4) {};
            \node (7) at (5.4, 5.4) {};
            
            \node (8) at (12,0) {};
        	\node (9) at (16,0) {};
        	\node (10) at (12,4) {};
        	\node (11) at (16,4) {};
        	\node (12) at (10.6, -1.4) {};
            \node (13) at (17.4, -1.4) {};
            \node (14) at (10.6, 5.4) {};
            \node (15) at (17.4, 5.4) {};
            
        	
        	\filldraw (0,0) circle (3pt) [color=black];
        	\filldraw (4,0) circle (3pt) [color=black];
        	\filldraw (0,4) circle (3pt) [color=black];
        	\filldraw (4,4) circle (3pt) [color=black];
        	\filldraw (12,0) circle (3pt) [color=black];
        	\filldraw (16,0) circle (3pt) [color=black];
        	\filldraw (12,4) circle (3pt) [color=black];
        	\filldraw (16,4) circle (3pt) [color=black];
        	
        	\draw[thick] (5) -- (1);
        	\draw[thick] (6) -- (2);
        	\draw[thick] (13) -- (9);
        	\draw[thick] (14) -- (10);
        	
        	\begin{scope}[thick,decoration={markings, mark=at position 0.4 with {\arrow{latex}}, mark=at position 0.73 with {\arrow{latex}}}] 
        	    \draw[postaction={decorate}] (0) -- (1);
            	\draw[postaction={decorate}] (1) -- (3);
            	\draw[postaction={decorate}] (3) -- (2);
            	\draw[postaction={decorate}] (2) -- (0);
            	\draw[postaction={decorate}] (9) -- (8);
            	\draw[postaction={decorate}] (11) -- (9);
            	\draw[postaction={decorate}] (10) -- (11);
            	\draw[postaction={decorate}] (8) -- (10);
	        \end{scope}
	        
	        \begin{scope}[thick,decoration={markings, mark=at position 0.7 with {\arrow{latex}}}]
	            \draw[postaction={decorate}] (4) -- (0);
        	    \draw[postaction={decorate}] (7) -- (3);
        	    \draw[postaction={decorate}] (8) -- (12);
        	    \draw[postaction={decorate}] (11) -- (15);
	        \end{scope}
	    \end{tikzpicture}
	    \caption{A trap of size 4 (left) and the corresponding subgraph after reversing the orientation on the edges (right). Notice that, after reversing the edges, there is no way for the process to enter the cycle from the outside. \label{fi:reverse-trap}}
	\end{figure}
	
	Invert the orientation of the partially oriented hypercube $\orient{\hyperc}_{\nplayers}^{\probless}$, while leaving the unoriented edges unchanged, to obtain $\rorient{\hyperc}_{\nplayers}^{\probless}$.  
	This is illustrated in \cref{fi:reverse-trap}.  The  random partially oriented graphs $\orient{\hyperc}_{\nplayers}^{\probless}$ and $\rorient{\hyperc}_{\nplayers}^{\probless}$ share the same distribution. 
Consider the partition 
\begin{equation}
\label{eq:partition01}
\vertices_{\nplayers}=\rorient{\larg}_{\nplayers}^{\probless,\actprofzero}\disjcup\rorient{\nolarg}_{\nplayers}^{\probless,\actprofzero}
\end{equation}
in such a way that $\rorient{\larg}_{\nplayers}^{\probless,\actprofzero}$ is the set that contains $\actprofzero$ as well as all vertices $\actprofalt$ that are accessible from $\actprofzero$ in the oriented graph $\rorient{\hyperc}_{\nplayers}^{\probless}$. Conversely, all vertices that are \emph{not} accessible from $\actprofzero$ are contained in $\rorient{\nolarg}_{\nplayers}^{\probless,\actprofzero}$ in the oriented graph $\rorient{\hyperc}_{\nplayers}^{\probless}$. Let $\rorient{\trap}$  be the subgraph in $\rorient{\hyperc}_{\nplayers}^{\probless}$ corresponding to the trap $\orient{\trap}\subseteq\orient{\hyperc}_{\nplayers}^{\probless}$. 
More precisely, $\rorient{\trap}$  and $\orient{\trap}$ share the same vertex set and edges,  either unoriented or with opposite orientation.  
We call  $\rorient{\trap}$ a  reversed trap.

Note that all edges in $\boundary\rorient{\trap}$ are either unoriented or oriented away from $\rorient{\trap}$.   
Hence, for all large $\nplayers$, either:
\begin{enumerate}[(i)]
\item 
\label{it:V-trap-1}
$\actprofzero \notin \vertices({\rorient{\trap}})$, which implies that $\vertices({\rorient{\trap}})$ is contained in $\rorient{\nolarg}_{\nplayers}^{\probless,\actprofzero}$, or

\item
\label{it:V-trap-2}
$\actprofzero \in \vertices({\rorient{\trap}})$,  which implies (using \cref{eq:trap}) that $\vertices({\rorient{\trap}})= \orient{\larg}_{\nplayers}^{\probless,\actprofzero}$.
\end{enumerate} 

We consider the two cases separately, and focus on case \ref{it:V-trap-1} first. We  prove that  
\begin{equation}\label{eq:ex-trap}
\sum_{\nplayers=1}^\infty \Prob(\exists \text{ a reversed trap }\rorient{\trap}\text{ such that } \vertices({\rorient{\trap}}) \subseteq \rorient{\nolarg}_{\nplayers}^{\probless,\actprofzero})
< \infty.
\end{equation}
Recall that any (reversed) trap has at least four vertices and notice that $\rorient{\nolarg}_{\nplayers}^{\probless,\actprofzero}$ and $\orient{\nolarg}_{\nplayers}^{\probless,\actprofzero}$ are equally distributed.
By \cref{le:McD2}, there exists $\delta>0$ such that \cref{eq:sum-McD} holds.
By \cref{eq:convergence}, we have
\begin{equation*}
\floorm_{\probless}=\floor*{\frac{1}{-\log_2(1-\probless)}}= \floor*{- \frac 1{\log_{2}{(1/2 + \probeq/2)}}} \le 3.
\end{equation*}
Hence, 
\begin{equation}
\label{eq:sum-less-than-4}
\sum_{\nplayers=1}^\infty \Prob(\exists\text{ a connected component of size at least  $4$ whose vertices are in  $\rorient{\nolarg}_{\nplayers}^{\probless,\actprofzero}$ })<\infty,    
\end{equation}
which implies  \cref{eq:ex-trap}.
 
We now focus on case \ref{it:V-trap-2} and prove that 
\begin{equation}
\label{eq:P-lcc-B'}
\sum_{\nplayers=1}^\infty \Prob(\exists \text{ a  trap }\orient{\trap}\text{ such that } \vertices({\orient{\trap}}) = \orient{\larg}_{\nplayers}^{\probless,\actprofzero} ) < \infty.
\end{equation}
As a  trap cannot contain any \ac{PNE},  to show that \cref{eq:P-lcc-B'} holds it is enough to prove that 
\begin{equation}
\label{eq:B'-no-PNE}
\sum_{\nplayers=1}^\infty \Prob(\orient{\larg}_{\nplayers}^{\probless,\actprofzero}  \text{ contains no \ac{PNE}})<\infty.
\end{equation}
To this end, we introduce  a new  class of events.
For every strategy $\actprof$, define the event 
\begin{equation}
\label{eq:G-s}
\threeincid_{\actprof} \coloneqq \braces*{\actprof \text{ is incident to at most $2$ oriented edges in }  \orient{\hyperc}_{\nplayers}^{\probless}}.
\end{equation}
Moreover, set
\begin{equation}
\label{eq:H-s}
\begin{split}
\threeinter_{\actprof} &\coloneqq  \threeincid_{\actprof}^{c}\cap\braces*{\actprof  \in \orient{\nolarg}_{\nplayers}^{\probless,\actprofzero} }\cap \{\actprof \text{ is a \ac{PNE}}\}. 
\end{split}
\end{equation} 
Fix $\varepsilon< \sqrt{1+\probeq} -1 $ and set 
\begin{equation}
\label{eq:ell}
\thresh_{\nplayers}= (1+\probeq)^{\nplayers} -(1 + \varepsilon)^{\nplayers} (1+\probeq)^{\nplayers/2}.
\end{equation}
We have that
\begin{equation}\label{eq:L-no-PE}
\braces*{\orient{\larg}_{\nplayers}^{\probless,\actprofzero}  \text{ contains no \ac{PNE}}}
\subseteq
\braces*{\card{\equilibria(\rgame_{\nplayers})} < \thresh_{\nplayers}}
\cup \braces*{\sum_{\actprof  \in \vertices_{\nplayers}} \mathds{1}_{\threeincid_{\actprof}} \ge \thresh_{\nplayers}}  
\cup \braces*{\bigcup_{\actprof \in \vertices_{\nplayers}}\threeinter_{\actprof}}.
\end{equation}
To see \cref{eq:L-no-PE}, notice that  the event in which all the \acp{PNE} are in $\orient{\nolarg}_{\nplayers}^{\probless,\actprofzero}$ can be decomposed into two disjoint subevents:
\begin{enumerate}[(a)]
\item
\label{it:PNE-fragment-1} 
the number of \acp{PNE} is smaller than $\thresh_{\nplayers}$; and
\item 
\label{it:PNE-fragment-2} 
the number of  \acp{PNE} is at least $\thresh_{\nplayers}$. 
\end{enumerate}
Then, under case \ref{it:PNE-fragment-2}, at least one of the following must hold: 
\begin{enumerate}[(i)]
\item
\label{it:sub-number-PNE-1}
there exists a \ac{PNE}  in $\orient{\nolarg}_{\nplayers}^{\probless,\actprofzero}$ which is incident to at least three directed edges so $\bigcup_{\actprof \in \vertices_{\nplayers}}\threeinter_{\actprof}$ holds; or
\item
\label{it:sub-number-PNE-2} all the \acp{PNE} are incident to at most two directed edges, which implies
$\sum_{\actprof \in \vertices_{\nplayers}} \mathds{1}_{\threeincid_{\actprof}} \ge \thresh_{\nplayers}$. 
\end{enumerate}

Next, we show how to use \cref{eq:L-no-PE} to prove \cref{eq:B'-no-PNE}. 

\noindent
Bound for \ref{it:PNE-fragment-1}:
We recall \cref{eq:ineq-N-equil1}, which  can be rewritten as 
\begin{equation}
\label{eq:sum-P-card-PNE}
 \sum_{\nplayers=1}^\infty  \Prob\parens*{\card{\equilibria(\rgame_{\nplayers})} < \thresh_{\nplayers}} < \infty.
\end{equation}

\noindent
Bound for \ref{it:PNE-fragment-2}\ref{it:sub-number-PNE-1}: 
For any $\actprof \in \vertices_{\nplayers}$, we prove that 
\begin{equation}
\label{eq:Hs-subset} 
\threeinter_{\actprof} \subseteq \braces*{\exists\text{ a connected component of size at least }4 \text{ whose vertices are  in }\rorient{\nolarg}_{\nplayers}^{\probless,\actprofzero}}.
\end{equation}
In fact, under the event $\threeinter_{\actprof}$, the vertex $\actprof$ is in $\orient{\nolarg}_{\nplayers}^{\probless,\actprofzero}$ and is incident to at least three oriented edges. 
Owing to $\actprof$ being a \ac{PNE}, each of these edges points toward $\actprof$. 
Hence, there are (at least) three other vertices in the connected component where $\actprof$ belongs, proving \cref{eq:Hs-subset}.
As a consequence, we have
\begin{equation}
\label{eq:union-Hs-subset} 
\bigcup_{\actprof \in \vertices_{\nplayers}} \threeinter_{\actprof} \subseteq \braces*{\exists\text{ a connected component of size at least }4\text{ whose vertices are  in }\rorient{\nolarg}_{\nplayers}^{\probless,\actprofzero}}.
\end{equation}
Therefore,
\begin{equation}
\label{eq:spe2}
\begin{split}
&\sum_{\nplayers=1}^\infty \Prob\parens*{\bigcup_{\actprof \in \vertices_{\nplayers}} \threeinter_{\actprof}}\\
&\le 
\sum_{\nplayers=1}^\infty  \Prob\parens*{\exists\text{ a connected component of size at least  $4$ whose vertices are  in  $\rorient{\nolarg}_{\nplayers}^{\probless,\actprofzero}$ }}<\infty.
\end{split}
\end{equation}
\noindent
Bound for \ref{it:PNE-fragment-2}\ref{it:sub-number-PNE-2}:
We have that
\begin{equation}\label{eq:P-G-s}
\Prob(\threeincid_{\actprof}) = \sum_{\run=0}^{2} \binom{\nplayers}{\run}  \probeq^{\nplayers-\run} (1- \probeq)^{\run}  \le  \consta_{5,\probeq} \nplayers^{2}  \probeq^{\nplayers},
\end{equation}
for some constant $\consta_{5,\probeq}$, which depends only on $\probeq$.
Using Markov's inequality, we have
\begin{equation}
\label{eq:spe1}
\sum_{\nplayers=1}^\infty \Prob\parens*{\sum_{\actprof} \mathds{1}_{\threeincid_{\actprof}} \ge \thresh_{\nplayers}} \le 
\sum_{\nplayers=1}^\infty  \consta_{5,\probeq} \nplayers^2  \frac{(2\probeq)^{\nplayers}}{\thresh_{\nplayers}} <\infty.
\end{equation}
The convergence of the sum is a consequence of $1+\probeq > 2 \probeq$, and the fact that 
$\thresh_{\nplayers} = (1+\probeq)^{\nplayers}(1+ o(1))$.

\cref{eq:sum-P-card-PNE,eq:spe1,eq:spe2} show that the subevents in \cref{eq:L-no-PE} are summable, proving \cref{eq:B'-no-PNE}. Furthermore, \cref{eq:ex-trap,eq:B'-no-PNE} imply \cref{eq:trap-less-infty}.
\end{proof}


\section{Conclusions and open problems}
\label{se:conclusions}

Large random games have many  \acp{PNE}, as long as the probability of ties is nonzero. 
We identified  the limiting distribution of the number of \acp{PNE} and their position with respect to the starting point of a \ac{BRD}. 
More specifically, in \cref{th:case-atoms,th:geo} we have proved that the number of \acp{PNE} grows geometrically when $\probeq>0$.
In \cref{th:accessibility} we have described the set of \acp{PNE} and have proved that, when $\probeq$ is larger than a certain threshold, some of them are not accessible.
In \cref{th:BRD}, instead, we have established that, when $\probeq$ is small enough, the \ac{BRD} converges to a \ac{PNE}. 
The relevance of our approach is that it creates a link between different subjects. 

The next important question is the following. 
How long does it take for a \ac{BRD} to reach a \ac{PNE}? 
This is equivalent to studying the path-length of a non-backtracking random walk on the percolation cluster of the hypercube. 
\cref{fi:arbre3} shows the results of a simulation exploring this problem. 

\begin{figure}[ht]  
	\centering 
	\includegraphics[width=8cm]{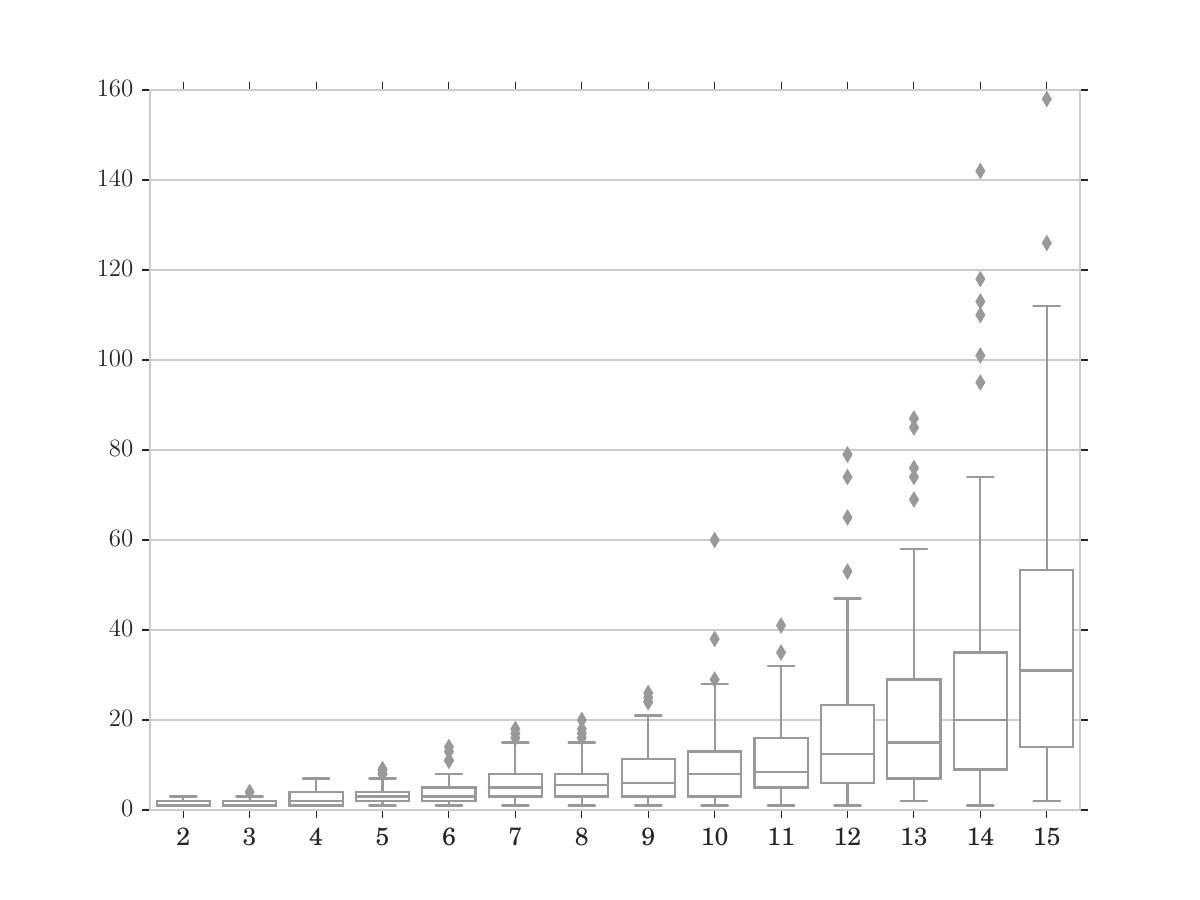}
	\caption{\label{fi:arbre3} Iterations needed for BRD to reach a \ac{PNE} for $\probeq=0.5$, with 100 trials per $\nplayers$.}
\end{figure}

Then, it is important to study the geometry of \acp{PNE} when more strategies are available, and when the payoffs are weakly dependent.


\subsection*{Acknowledgements}
	Marco Scarsini is a member of GNAMPA-INdAM and of the COST Action GAMENET.
	He gratefully acknowledges the support and hospitality of the Department of Mathematics at Monash University, where this research started. 
	His work is partially supported by the GNAMPA-INdAM Project 2019 ``Markov chains and games on networks'' and by the PRIN 2017 project ALGADIMAR.
	Andrea Collevecchio's  work is partially supported by ARC grant  DP180100613 and Australian Research Council Centre of Excellence for Mathematical and Statistical Frontiers (ACEMS) CE140100049. 
	Ben Amiet and Ziwen Zhong  are supported by an Australian Government Research Training Program (RTP) Scholarship.
	The authors thank Alfie Mimun, Matteo Quattropani and Alex Scott for some very helpful comments, Mikhail Isaev for pointing out to one of us the paper by  \citet{McDScoWhi:arXiv2020}. 
	Moreover, the authors thank the Associate Editor and the referees for comments that  improved the paper and for mentioning the paper by \citet{Lin:arXiv2009}.

\bibliography{../bibtex/bibNEpercolation}
\bibliographystyle{apalike}

\section{List of symbols}
\begin{longtable}{p{.13\textwidth} p{.82\textwidth}}

$\ball_{\radius}(\actprof)$ & ball of radius $r$ centred at $\actprof$, introduced in \cref{le:McD2}\\
$\percol_{\run}$ & percolation process at time $\run$, defined in \cref{eq:Bk+1}\\
$\percol^{\probless}_{\nplayers}$ & edge percolation coupled with random game\\
$\fairperc{\percol}_{\nplayers}$ & percolation process, defined in proof of \cref{th:accessibility} \ref{it:th:accessibility-2}\\
$\brn$ & \ac{BRD} on $\hyperc_{\nplayers}$\\
$\comp_{\ncomp}$ & number of components of size $\ncomp$ in the percolation $\percol^{*}$ \\
$\card{\event}$ & cardinality of the set $\event$\\
$\profitable(\actprof)$ & set of profitable deviations from $\act$, define in \cref{eq:profitable}\\
$\edges_{\nplayers}$ &  edge set of $\hyperc_{\nplayers}$\\
$\orient{\edges}_{\nplayers}$ & edge set of $\orient{\hyperc}_{\nplayers}^{\probless}$\\
$\mathscr{F}$ & $\sigma$-algebra associated with the probability space, introduced after \cref{de:Nash-equilibrium}\\
$\payoff_{\play}$ & payoff function for player $i$, introduced in \cref{eq:game-def}\\
$\threeincid_{\actprof}$ & event defined in \cref{eq:G-s}\\
$\hamm$ &  Hamming distance on $\hyperc_{\nplayers}$, defined in \cref{eq:Hamming}\\
$\hyperc_{\nplayers}$ &  $\nplayers$-cube\\
$\orient{\hyperc}_{\nplayers}^{\probless}$ & partially oriented hypercube\\
$\rorient{\hyperc}_{\nplayers}^{\probless}$ & the graph obtained by reversing the oriented edges in $\orient{\hyperc}_{\nplayers}^{\probless}$, introduced in the proof of \cref{th:case-atoms}\\
$\subs$ & a subset of $\vertices_{\nplayers}^{\even}$, introduced in \cref{eq:indepI}\\
$\subs_{-\actprofalt}$ & equal to $\subs\setminus\braces{\actprofalt}$, defined in \cref{eq:indepIalt}\\
$\consta_\probeq$ & a constant dependent on $\probeq$, introduced in \cref{th:case-atoms}\\
$\thresh_{\nplayers}$ & defined in \cref{eq:ell}\\
$\orient{\larg}_{\nplayers}^{\probless,\actprofzero}$ & set of all vertices in $\orient{\hyperc}_{\nplayers}^{\probless}$ accessible from $\actprofzero$, introduced in \cref{eq:partition}\\
$\rorient{\larg}_{\nplayers}^{\probless,\actprofzero}$ & set of all vertices in $\rorient{\hyperc}_{\nplayers}^{\probless}$ accessible from $\actprofzero$, introduced in \cref{eq:partition01}\\
$\larg_{\nplayers}^{\probless}$ & giant component of $\percol^{\probless}_{\nplayers}$\\
$\larg_{\nplayers}^{\probless, \actprofzero}$ & connected component of $\percol^{\probless}_{\nplayers}$ containing $\actprofzero$\\
$\floorm_{\probless}$ & constant, introduced in \cref{le:McD2}\\
$\nolarg_{\nplayers}^{\probless}$ & the fragment of the percolation, introduced in the proof of \cref{pr:finally}\\
$\orient{\nolarg}_{\nplayers}^{\probless,\actprofzero}$ &  set of all vertices in $\orient{\hyperc}_{\nplayers}^{\probless}$ not accessible from $\actprofzero$, introduced in \cref{eq:partition}\\
$\rorient{\nolarg}_{\nplayers}^{\probless,\actprofzero}$ &  set of all vertices in $\rorient{\hyperc}_{\nplayers}^{\probless}$ not accessible from $\actprofzero$, introduced in \cref{eq:partition01}\\
$\nplayers$ & number of players, introduced in \cref{eq:game-def}\\
$\players$ & set of players, introduced in \cref{eq:game-def}\\
$\Norm(\mu,\sigma^{2})$ & normal distribution with mean $\mu$ and variance $\sigma^2$, as introduced after \cref{th:case-atoms}\\
$\equilibria(\game_{\nplayers})$ & set of \acp{PNE} in $\game_{\nplayers}$, introduced in \cref{de:Nash-equilibrium}\\
$\expproc_{\run}$ & exploration process at time $k$, defined in \cref{eq:Pk+1}\\
$\orientedges{\actprof}{\actprofalt}$ &  event in which $\orient{[\actprof, \actprofalt]} \in \orient{\edges}$, introduced in \cref{eq:ind-s-to-t}\\
$\actprof$ & a strategy profile, introduced after \cref{eq:game-def}\\
$\actprof_{-\play}$ & strategy profile $\actprof$ for all players except $\play$, introduced after \cref{eq:game-def}\\
$\actions_{\play}$ & set of strategies for player $i$, introduced in \cref{eq:game-def}\\
$\actions$ & set of all possible strategy profiles, introduced after \cref{eq:game-def}\\
$[\actprof, \actprofalt]$ &  edge connecting vertices $\actprof$ and $\actprofalt$\\
$\sequilibria(\game_{\nplayers})$ & set of \acp{SPNE} in $\game_{\nplayers}$, introduced in \cref{de:Nash-equilibrium}\\
$\orient{[\actprof, \actprofalt]}$ & edge $[\actprof, \actprofalt]$ oriented from $\actprof$ to $\actprofalt$\\
$\trap$ & subgraph of $\hyperc_{\nplayers}$ corresponding to $\orient{\trap}$, defined in the proof of \cref{th:BRD}\\
$\orient{\trap}$ & a trap, as defined in \cref{de:trap}\\
$\rorient{\trap}$ & subgraph of $\rorient{\hyperc}_{\nplayers}^{\probless}$ corresponding to $\orient{\trap}$, defined in the proof of \cref{th:BRD}\\
$\vertices_{\nplayers}$ & vertex set of $\hyperc_{\nplayers}$\\
$\vertices_{\nplayers}^{\even}$ & set of vertices whose sum of coordinates is even, introduced in \cref{eq:bipartite}\\
$\vertices_{\nplayers}^{\odd}$ & set of vertices whose sum of coordinates is odd, introduced in \cref{eq:bipartite}\\
$\incid_{\nplayers}$ & number of vertices incident to at least $2^{\nplayers - \floorm_{\probless}}$ unoriented vertices\\
$\rand_{\play}^{\actprof}$ & random variable dictating the payoff for player $\play$ of strategy profile $\actprof$\\
$\probeq = 1 - 2\probless$ &  probability of payoffs being equal, introduced in \cref{eq:probeq}\\
$\game_{\nplayers}$ & game with $\nplayers$ players, introduced in \cref{eq:game-def}\\
$\boundary \verticesalt$ &  set of vertices which are neighbors of vertex set $\verticesalt$, introduced in \cref{eq:neigh}\\
$\upbound_{\nplayers}$ & a bounding constant, defined in \cref{eq:hndef}\\
$\orient{\boundary \verticesalt}$ &  set of vertices which are out-neighbors of vertex set $\verticesalt$, introduced in \cref{eq:out-neigh}\\
$\equi{\actprofalt}$ & the indicator of the event that the vertex $\actprofalt$ is incident only to unoriented edges  in $\orient{\hyperc}_{\nplayers}^{\probless}$, introduced in \cref{eq:sum-isol}\\
$\equii{\nplayers}$ & defined in \cref{eq:sum-isol}\\
$\kolmo$ & Kolmogorov distance, defined in \cref{eq:Kolmogorov}\\
$\ecomp_{\ncomp}$ & $\Expect\bracks{\comp_{\ncomp}}$\\
$\rgame_{\nplayers}$ & random game with $\nplayers$ players\\
$\dist_{\TV}$ & total variation distance, defined in \cref{eq:TV}\\
$\isoln_{\nplayers}$ & the set of all isolated vertices in $\percol^{\probless}_{\nplayers}$, introduced in \cref{pr:largest-connected-component}\\
$\actprofzero$ &  strategy profile $(0,0,...,0)$\\
\end{longtable}

\end{document}